\documentclass[journal,draftcls,onecolumn,12pt,twoside]{IEEEtranTCOM}

\usepackage{moreverb}
\usepackage{epsfig}
\usepackage{amsmath,amssymb,amsthm,mathrsfs,amsfonts,dsfont}
\usepackage{adjustbox,lipsum}
\usepackage[linesnumbered,ruled,vlined]{algorithm2e}
\usepackage{amsfonts}
\usepackage{epsfig}
\usepackage{amssymb}
\usepackage{amsmath}
\usepackage{amsthm}
\usepackage{multirow}
\usepackage{setspace}
\usepackage{rotating}
\usepackage{graphicx}
\usepackage{tabularx}
\usepackage{array}
\usepackage{anyfontsize}
\usepackage{color,soul}
\usepackage{graphicx,dblfloatfix}
\usepackage{epstopdf}
\usepackage{blindtext}
\usepackage{amsmath}
\usepackage{amsthm,amssymb,amsmath,bm}
\usepackage{subfigure}
\usepackage{amsfonts}
\usepackage{epsfig}
\usepackage{amssymb}
\usepackage{amsmath}
\usepackage{cite}
\hyphenation{op-tical net-works semi-conduc-tor}
\usepackage{graphicx}
\usepackage{fancyhdr}
\usepackage[font={small}]{caption}
\usepackage{tabularx}
\usepackage{tcolorbox}
\usepackage{cite}
\usepackage{setspace}

\usepackage[inline]{enumitem}

\usepackage{amsthm}

\newtheorem{theorem}{Theorem}
\newtheorem{lemma}{Lemma}

\newtheorem{definition}{Definition}
\newtheorem{remark}{Remark}
\newtheorem{proposition}{Proposition}

\sloppy
\allowdisplaybreaks
\setlength{\abovedisplayskip}{6pt}
\setlength{\belowdisplayskip}{3pt}

\title{On the Semantic Security in the General Bounded Storage Model: A New Proof
} 

\author{ 
\IEEEauthorblockN{Mohammad~Moltafet, Hamid~R.~Sadjadpour, and Zouheir~Rezki
}
\thanks{The authors are with the Department of Electrical and Computer Engineering, University of California Santa Cruz (UCSC), Santa Cruz, CA 95064, USA (e-mail: mmoltafe@ucsc.edu, hamid@soe.ucsc.edu, zrezki@ucsc.edu).
}
}

% ----------------------------------------------
\begin{document}
\maketitle
\sloppy 

%%%%%%%%%%%%%
\begin{abstract}
In the bounded storage model introduced by Maurer, the adversary is computationally unbounded and has a bounded storage capacity. In this model, \textit{information-theoretic} secrecy is guaranteed by using a publicly available random string whose length is larger than the adversary storage capacity. The protocol proposed by Maurer is simple, from the perspective of implementation, and efficient,  from the perspective of the initial secret key size and random string length. However, he provided the proof of the security for the case where the adversary can access a constant fraction of the random string and store only \textit{original bits} of the random string. In this paper, we provide a new proof of the security of the protocol proposed by Maurer for the \textit{general} bounded storage model, i.e., the adversary can access all bits of the random string, and store the output of any Boolean function on the string. We reaffirm that the protocol is \textit{absolutely semantically secure} in the general bounded storage model.

\emph{Index Terms--} Perfect security,  information-theoretic secrecy, bounded storage model.
\end{abstract}

\section{Introduction}
The most effective approach of secrecy coding that provides \textit{perfect secrecy} is the one-time pad (or Vernam cipher). Vernam introduced his cipher in 1926 \cite{Vernam1926} and Shannon in 1949 \cite{Shannon} proved that the one-time pad scheme provides perfect secrecy. According to the one-time pad scheme, to securely exchange a message, an independent and uniformly distributed one-time pad whose size equals the message length needs to be established between the transmitter and receiver. More specifically, in \cite{Shannon}, Shannon proved that perfect secrecy against an all-powerful adversary with unbounded storage capacity and unbounded computational power who has complete access to the communication line is only achievable if the uncertainty
of the secret key is at least as great as that of the plaintext, this is also known as {\it Shannon impossibility result}\cite{Yevgeniy2012}.

In 1992, Maurer introduced the bounded storage model and proposed the first protocol in the bounded storage model that provides \textit{information-theoretic} secrecy in the seminal work \cite{Maurer}. In the bounded storage model, the adversary is computationally unbounded and has a bounded storage capacity, and information-theoretic secrecy is guaranteed by using a publicly available random string whose length is larger than the adversary storage capacity and message size. In this model, the adversary performs an attack in two phases. In phase one, first, the transmitter and receiver establish a secret key, and then, the random string is broadcast. Using the shared secret key, an encryption protocol, and the random string, the transmitter and receiver compute a final key to encrypt and decrypt the message. In this phase, the adversary can compute a function on the random string and store the result. In the second phase, the random string is not available and the adversary is provided with the ciphertext,  \textit{the secret key},  \textit{unbounded storage capacity}, and unbounded computational power. He tries to get information about the encrypted message using the provided information.

Security in the bounded storage model is directly related to the size of the public random string; the larger the random string, the more secure the system. Let $k$ denote the security parameter which determines the length of the random string, $m$ denote the message length, and $n$ denote a large positive integer.
In \cite{Maurer}, the size of the secret key is $k\log_2{n}$, the length of the random string is $kn$, and the author provided the proof of the security for the case where the adversary can access a constant fraction of the random string and store only \textit{original bits} of the random string. Until 1997, it was an open problem to achieve information-theoretic secrecy in a \textit{general} bounded storage model, where the adversary can access all bits of the random string, and store the output of any Boolean function on the string.  
The authors of \cite{Maurergeneralb} provided a protocol and proved that it provides information-theoretic secrecy in the general bounded storage model.  
However, the protocol is much more complicated than the protocol provided in  \cite{Maurer} and the security results are not as efficient as desired. It requires that the transmitter and receiver transmit and store a considerable number of bits. In addition, the protocol is implemented by using multiplication in the field $F_{2^l}$, which is costly for the required amount of transmission. More specifically, to make sure that the probability of revealing information about the plaintext to the adversary is smaller than $\tau$, the protocol requires Alice and Bob to transmit ${3}/{\tau^2}$ bits and store ${3}/{\tau^2}\log |\boldsymbol{\alpha}|$ bits, where $|\boldsymbol{\alpha}|$ is the length of the random string. For example, if we consider $|\boldsymbol{\alpha}|= 2^{37}$, and we require $\tau=10^{-6}$, Alice and Bob have to transmit $3\times10^{12}$ bits, store $1.11\times10^{14}$ bits, and the protocol requires multiplications of element in the field $F_{2^{3\times10^{12}}}$.
The authors of \cite{AumannRabinAR} provided a simple protocol (distinct from the one introduced in \cite{Maurer}) that is provably secure in the general bounded storage model. The protocol exploits a secret key of length $k\log_2{n}$
and requires a random string with length $mkn$ bits which is a considerable number of random bits compared to the protocol introduced in \cite{Maurer}.
In \cite{AumannZongRabin02}, the authors extended the work in \cite{AumannRabinAR} and provided a new provably secure protocol in which the size of the random string is $n$  which is shorter than the previous one.
However, the main problem with this protocol is that the size of the secret key is $mk\log_2{n}$, 
which is much longer than the message length. In \cite{DingRabin02}, the authors proved that by using the protocol provided in \cite{AumannZongRabin02}, the shared secret key can be used to securely transmit an exponential number of messages against an adaptive attacker, i.e., the attacker can adaptively learn the final keys.
The work in \cite{DziembowskiMaurer02} is the first work that provided the proof of the security for the protocol provided in \cite{Maurer} in the general bounded storage model with a secret key of size $k\log_2{n}$ and a random string of size $k(n+m-1)$. They proved that the statistical distance between the final key and the uniform distribution is very small.

In this paper, we provide a new proof for the security of the provided protocol by Maurer in \cite{Maurer} 
in the \textit{general} bounded storage model with a secret key of size $k\log_2{n}$ and a random string of size $kn$ (and thus, smaller than that required in \cite{DziembowskiMaurer02}).  We reaffirm that the protocol is \textit{absolutely semantically secure} in the general bounded storage model. 
The proof is different from (and simpler than) the approach used in \cite{DziembowskiMaurer02}. The main idea behind the proof is as follows. First, we demonstrate that if the adversary is provided with all but one bit of the plaintext, the probability that he can compute the missing bit is exponentially small in the security parameter $k$, which is called \textit{bit security}. Next, we establish the relationship between bit security and semantic security. Specifically, we illustrate that if the adversary can compromise the semantic security of the protocol, he can compute the missing bit, thus contradicting the bit security. In the proof of bit security, our main approach is to demonstrate that the number of strings for which an arbitrary decoding function of the adversary in Phase II can compute the missing bit is very small compared to the number of random strings resulting in the same output as in the first phase of the attack.

 \subsection{Organization}
The paper is organized as follows. The encryption and decryption protocol and main results of the paper are presented in Section~\ref{Protocol and Main Results}.   In Section~\ref{Proof bounded Storage Model}, the proof of security in the general bounded storage model is presented. Finally, concluding remarks are made in Section~\ref{Conclusions}.
%\subsection{The Bounded Storage Model}

\subsection{Notation}  
A random vector is denoted by a bold capital letter,  whereas the corresponding underlined capital letter denotes a realization of the random vector; a random variable is denoted by a capital letter, whereas the corresponding small letter denotes a realization of the random variable; and a set is denoted by a calligraphy letter. Let $\mathcal{G}$ be a finite set, then, $G\stackrel{R}{\leftarrow} \mathcal{G}$ denotes choosing $G$ uniformly at random from $\mathcal{G}$. All the logarithm functions in this paper have base $2$.

\section{The Protocol and Main Results}\label{Protocol and Main Results}
%Let $n$ denote a large prime number and  $k$ denote a security parameter in the system. 
%Let $k$ denote the system security parameter. 
Let ${\boldsymbol{\alpha}=(\boldsymbol{\alpha}^{(1)},\ldots,\boldsymbol{\alpha}^{(k)}),}$ where $\boldsymbol{\alpha}^{(j)}\in\{0,1\}^n$ for all $1\le j\le k,$ denote a random string with length $kn$ and $\boldsymbol{\alpha}^{(j)}[i]$ denote the $(i+1)$th element of string $\boldsymbol{\alpha}^{(j)}$, i.e., the bits in string $\boldsymbol{\alpha}^{(j)}$ are indexed from $0$ to $n-1$. The bits of the random string $\boldsymbol\alpha$ are uniformly distributed and statistically independent. Let $\mathbb{Z}_n=\{0,1,\ldots,n-1\}$ denote a group with addition modulo $n$ as the group operation which is shown by $\pmb{+}$. Next, we present the encryption protocol.

\subsection{The Protocol}
Suppose that Alice wants to send message ${{\mathbf{M}=(M_1,\ldots,M_m)\in\{0,1\}^{m}}}$ to Bob in the presence of an adversary whose storage capacity is bounded by $\beta=\gamma kn$, with $\gamma<1$. Note that the message size is smaller than $n$. The main goal is to provide the final key ${\mathbf{X}=(X_1,\ldots,X_m)\in\{0,1\}^{m}}$ for encryption and decryption of the message $\mathbf{M}$. In this regard, first, they establish a secret key $\mathbf{Z}=(Z_1,\ldots,Z_k)$
such that $\mathbf{Z}\stackrel{R}{\leftarrow} \mathbb{Z}_n^k$, and thus of size $|\mathbf{Z}|=k\log n$ bits.
%and $Z^{(2)}\stackrel{R}{\leftarrow} {\mathbb{Z^*}}_n^k$. 
Then, using the shared key $\mathbf{Z}$, they compute the tuple of sub-keys $\mathcal{S}$ containing $m$ sub-keys, denoted as ${\mathcal{S}=\left(\mathbf{S}^{(1)},\ldots,\mathbf{S}^{(m)}\right)}$,
%\footnote{Note that unlike in a typical set, the order of elements is important and elements are indexed from $1$ to $m$.} 
where $\mathbf{S}^{(i)}$ is computed as follows:
\begin{align}\nonumber 
\mathbf{S}^{(i)}&=
\mathbf{Z}\pmb{+}(i-1)\bm{1}\\&\label{eq1}=(Z_1\pmb{+}(i-1),\ldots,Z_k\pmb{+}(i-1)),
\end{align}
where $\bm{1}=(1,\ldots,1)\in \mathbb{Z}_n^k$. The random string $\boldsymbol\alpha$ is publicly available and Alice and Bob observe it on the fly and by use of the set of sub-keys compute the $i^{th}$ bit of the final key $\mathbf{X}$, i.e., $X_i$ for all $1\le i \le m$,  as follows:
\begin{align}\label{eq2ql}
% \mathbf{Z}(i,\boldsymbol{\alpha})\triangleq
 X_i
=\bigoplus_{j=1}^k\boldsymbol{\alpha}^{(j)}[S^{(i)}_{j}],
\end{align}
where $S^{(i)}_{j}=Z_{j}\pmb{+}(i-1)$, i.e., the $j$th element of the sub-key $\mathbf{S}^{(i)}$.
Now, Alice computes $\mathbf{C}=\mathbf{M}\oplus \mathbf{X}$, where $\oplus$ denotes bit-wise XOR, and sends $\mathbf{C}$ to Bob as the ciphertext, and Bob decrypts the message by computing $\mathbf{M}=\mathbf{C}\oplus \mathbf{X}$. The steps of the encryption and decryption procedure are summarized in Algorithm~\ref{Alg_Prot}.

\begin{algorithm}[t]
\SetAlgoLined
Message: $\mathbf{M}=(M_1,\ldots,M_m)\in \{0,1\}^m$,

Secret key: $\mathbf{Z}=(Z_1,\ldots,Z_k)$
such that $\mathbf{Z}\stackrel{R}{\leftarrow} \mathbb{Z}_n^k$,

Random string: $\boldsymbol\alpha=(\boldsymbol\alpha^{(1)},\ldots,\boldsymbol\alpha^{(k)}),~\boldsymbol\alpha^{(j)}\in\{0,1\}^n,~\forall 1\le j\le k$,

\For  {$i=1$ to $m$}{
Compute sub-key $\mathbf{S}^{(i)}$:

%$(\sigma_{i,1},\ldots,\sigma_{i,k})=\mathbf{Z}+(i-1)\bm{1}$,
$\mathbf{S}^{(i)}=
(Z_1\pmb{+}(i-1),\ldots,Z_k\pmb{+}(i-1)),$

Compute the $i$th bit of the final key $\mathbf{X}$, $X_i$:

$X_i
%\triangleq s^{(i)}(\alpha)
=\bigoplus_{j=1}^k\boldsymbol\alpha^{(j)}[S^{(i)}_{j}]$,

}

Alice and Bob set the final key $\mathbf{X}=(X_1,\ldots,X_m)$,

Alice encrypts $\mathbf{C}=\mathbf{M}\oplus \mathbf{X}$, and sends $\mathbf{C}$ to Bob,

Bob decrypts  the message $\mathbf{M}=\mathbf{C}\oplus \mathbf{X}$.

\caption{ Encryption and decryption protocol of message $\mathbf{M}$}
\label{Alg_Prot}
\end{algorithm} 
\begin{remark}\label{rem-01}
To implement the encryption and decryption protocol presented in Algorithm~\ref{Alg_Prot}, Alice and Bob use only $mk$ bits of the random string $\boldsymbol\alpha$ whose places are determined according to the set of sub-keys $\mathcal{S}$, and the protocol is implemented by exploiting the simple XOR operation.  
\end{remark}
 
 The random string ${\boldsymbol\alpha}$ is publicly available and the adversary chooses  any recording function $A_1(\boldsymbol\alpha):\{0,1\}^{kn}\rightarrow \{0,1\}^{\beta}$, computes ${\boldsymbol\zeta}=A_1(\boldsymbol\alpha)$, and stores ${\boldsymbol\zeta}$.
 %\blue{subject to $|\eta|= \beta$, where $|\eta|$ denotes the number of bits comprising $\eta$.}
 We show that even if the adversary later gets the secret key $\mathbf{Z}$ and his computational power is unbounded, the encryption is secure. It is worth noting that the restriction on the adversary's storage is applied during the transmission of $\boldsymbol\alpha$, after that there is no restriction of the adversary's storage. Without loss of generality, in this paper, results are derived for $\gamma=0.45$.\footnote{Similar results can be derived for any $\gamma<1$.} Next, we present the attack model.
 
 \subsubsection{Attack Model}
 In the bounded storage model, formally, the adversary performs an attack in two phases as follows:
 \begin{itemize}
     \item \textbf{Phase I}: The random string $\boldsymbol\alpha\stackrel{R}{\leftarrow}\{0,1\}^{kn}$ is broadcast. The adversary performs an arbitrary recording function $A_1(\boldsymbol\alpha):\{0,1\}^{kn}\rightarrow \{0,1\}^{\beta}$ on the random string $\boldsymbol\alpha$ and computes ${\boldsymbol\zeta}=A_1(\boldsymbol\alpha)$. At the end of this phase, the adversary stores ${\boldsymbol\zeta}$.
     \item \textbf{Phase II}: The adversary is provided with the ciphertext $\mathbf{C}$, the output of Phase I, ${\boldsymbol\zeta}$, the secret key $\mathbf{Z}$, and infinite computing power and infinite storage space. Using the provided information, the adversary tries to gain information on the message $\mathbf{M}$ by applying any decoding function $A_2({\boldsymbol\zeta},\mathbf{Z},\mathbf{C})$.
 \end{itemize}

We prove that the security of the protocol follows the absolute version  of the notion of semantic security \cite{Goldwasser1984}, i.e., the protocol  is semantically secure in a system allowing a computationally
unbounded adversary, and thus, the protocol is absolutely semantically secure. In other words, we prove that for any recording function $A_1(\boldsymbol\alpha)$ and any decoding algorithm $A_2({\boldsymbol\zeta},\mathbf{Z},\mathbf{C})$, the probability that the adversary with unbounded computational power gains even one bit of information on the message $\mathbf{M}$ is exponentially small in the security parameter $k$. More specifically, consider two equiprobable messages $\mathbf{M}^0$ and $\mathbf{M}^1$. One of the two messages is chosen uniformly at random, encrypted using the provided protocol, and transmitted. The adversary wishes to know which one of the messages is transmitted. We show that using any recording function $A_1(\boldsymbol\alpha)$ and any decoding algorithm $A_2({\boldsymbol\zeta},\mathbf{Z},\mathbf{C})$ with unbounded computational power, the adversary cannot distinguish between $\mathbf{M}^0$ and $\mathbf{M}^1$ from the ciphertext $\mathbf{C}$, except with an exponentially small probability in the security parameter $k$. The following theorem presents the security of the protocol.
\begin{theorem}\label{‌Bounded-s-th}
    For any two equiprobable messages $\mathbf{M}^0$ and $\mathbf{M}^1$ of size $m$, for any recording function $A_1(\boldsymbol\alpha):\{0,1\}^{kn}\rightarrow \{0,1\}^{\beta}$, for any decoding algorithm $A_2$, for $\boldsymbol\alpha\stackrel{R}{\leftarrow}\{0,1\}^{kn}$, and $\mathbf{Z}\stackrel{R}{\leftarrow} \mathbb{Z}_n^k$, the advantage of the adversary in distinguishing  between the encryption of the two messages is upper-bounded  as 
    \begin{align}\nonumber
&\bigg|\mathrm{Pr}\left(A_2({\boldsymbol\zeta},\mathbf{Z},\mathbf{M}^1\oplus \mathbf{X})=1\right)-\mathrm{Pr}\left(A_2({\boldsymbol\zeta},\mathbf{Z},\mathbf{M}^0\oplus \mathbf{X})=1\right)\bigg|\\& \label{theo1ineq}~~~~~~~~~~~~~~~~~~~~~~~~~~~~~<m(2^{-k/6+1}+2^{-0.002kn+2}).
    \end{align}
\end{theorem}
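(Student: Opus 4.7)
The plan is to prove Theorem~\ref{‌Bounded-s-th} via a two-step reduction. First, I would establish a \emph{bit security} lemma: if the adversary is given the ciphertext $\mathbf{C}$, the stored string $\boldsymbol\zeta$, the secret key $\mathbf{Z}$, and every bit of the plaintext $\mathbf{M}$ except one, then the probability of correctly predicting the missing bit is at most $1/2$ plus a term of the form $2^{-k/6}+2^{-0.002kn+1}$. Second, I would show that any semantic-security distinguisher yields a bit predictor, so a standard hybrid argument over the $m$ plaintext positions transfers the bit-security bound to the distinguishing advantage, paying a multiplicative factor of $m$ and producing the exact right-hand side of \eqref{theo1ineq}.

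For the bit-security step, the key observation is that knowing $\mathbf{C}$ together with all but one bit of $\mathbf{M}$ is equivalent to knowing all but one bit of the final key $\mathbf{X}$, so it suffices to bound the predictability of $X_i$ from $\boldsymbol\zeta$, $\mathbf{Z}$, and $\{X_j\}_{j\ne i}$. Once $\mathbf{Z}$ is fixed, the locations in $\boldsymbol\alpha$ that feed $X_i$ are the $k$ positions $\{(j,Z_j\pmb{+}(i-1))\}_{j=1}^{k}$, and the $(m-1)k$ positions producing the other $X_j$ are disjoint shifts of these. My main combinatorial claim would be: for every fixed $A_1$, every stored value $\boldsymbol\zeta$, every decoder $A_2$, and every fixed $\mathbf{Z}$, the fraction of $\boldsymbol\alpha\in A_1^{-1}(\boldsymbol\zeta)$ on which $A_2$ predicts $X_i$ correctly is close to $1/2$. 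I would argue this by partitioning the fiber $A_1^{-1}(\boldsymbol\zeta)$ into pairs that agree on every bit of $\boldsymbol\alpha$ outside the $k$ relevant positions and that flip an odd number of those $k$ bits, so that the two members of a pair yield opposite values of $X_i$. Because the storage budget is $\beta = 0.45\,kn$, an averaging argument first shows $|A_1^{-1}(\boldsymbol\zeta)|\ge 2^{(1-\gamma)kn}$ for a $1-2^{-0.002kn+1}$ fraction of $\boldsymbol\zeta$; inside a large fiber, most of the $k$ relevant bits are still nearly balanced, and a second averaging over the choice of $\mathbf{Z}\in\mathbb{Z}_n^k$ (which randomizes the $k$ column indices) yields the $2^{-k/6}$ residual bias.

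For the reduction step, I would use the hybrid sequence $\mathbf{H}^0=\mathbf{M}^0,\mathbf{H}^1,\ldots,\mathbf{H}^m=\mathbf{M}^1$, where $\mathbf{H}^j$ agrees with $\mathbf{M}^1$ in the first $j$ coordinates and with $\mathbf{M}^0$ afterwards. Telescoping, the distinguishing advantage is at most $\sum_{j=1}^{m}\bigl|\mathrm{Pr}(A_2=1\mid \mathbf{C}=\mathbf{H}^{j}\oplus\mathbf{X}) - \mathrm{Pr}(A_2=1\mid \mathbf{C}=\mathbf{H}^{j-1}\oplus\mathbf{X})\bigr|$. Adjacent hybrids differ only in bit $j$, so any gap in that term would let the adversary (who already knows $\mathbf{H}^{j-1}$ and $\mathbf{H}^{j}$) predict $X_j$ from $\boldsymbol\zeta,\mathbf{Z}$ and the XOR of $\mathbf{C}$ with all other key bits (recovered from the known coincidences of $\mathbf{H}^{j-1}$ and $\mathbf{H}^{j}$), contradicting the bit-security bound. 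Summing the $m$ contributions recovers $m(2^{-k/6+1}+2^{-0.002kn+2})$.

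The main obstacle will be the combinatorial bit-security estimate, not the hybrid argument. The decoder $A_2$ and recording function $A_1$ are arbitrary, so one cannot reason structurally about what $A_1$ compresses; one must rely purely on the counting inequality together with independence of the $k$ bits that define $X_i$ given the rest of $\boldsymbol\alpha$. The delicate point is simultaneously handling two sources of non-uniformity, namely small fibers $A_1^{-1}(\boldsymbol\zeta)$ (which contribute the $2^{-0.002kn}$ term after ruling them out by Markov) and fibers on which the $k$ relevant bits happen to be heavily biased (which contribute the $2^{-k/6}$ term after averaging over $\mathbf{Z}$). Splitting the analysis cleanly along these two axes, and choosing the right thresholds so the constants $1/6$ and $0.002$ arise, is where the technical work of the proof lies.
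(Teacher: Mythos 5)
Your overall architecture matches the paper's: prove a bit-security proposition (predicting the missing bit $X_i$ of the final key from $\boldsymbol\eta$, $\mathbf{Z}$, and $\mathbf{X}^{-i}$ has bias at most $2^{-k/6}+2^{-0.002kn+1}$), then transfer it to semantic security by a hybrid/reduction argument losing a factor $m$ (the paper invokes Lemma~23 of the Aumann--Ding--Rabin paper for exactly this step), and the Markov-type argument ruling out small fibers $B_1^{-1}(\boldsymbol\zeta)$ is also how the paper obtains the $2^{-0.002kn}$ term. That part of your plan is sound.

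The gap is in the core combinatorial step. Your claim that the fiber $A_1^{-1}(\boldsymbol\zeta)$ can be partitioned into pairs that agree on every bit of $\boldsymbol\alpha$ outside the $k$ positions feeding $X_i$ and flip an odd number of those $k$ bits is false for a general Boolean recording function: the fiber is an arbitrary subset of $\{0,1\}^{kn}$ of size roughly $2^{0.55kn}$, and nothing forces two of its elements to coincide outside the $k$ relevant coordinates (e.g., take $A_1$ to be a parity-based compression; then distinct fiber elements typically differ in many coordinates, and no such matching exists). This pairing structure is exactly what one has in Maurer's \emph{original} model, where the adversary stores original bits and the fiber is a subcube; assuming it here amounts to re-proving only the restricted result, which is the very limitation the paper is addressing. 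Relatedly, your intermediate claim ``for every fixed $\mathbf{Z}$ the fraction of $\boldsymbol\alpha$ in the fiber on which the decoder predicts $X_i$ correctly is close to $1/2$'' is too strong and false: $A_1$ may store verbatim the $k$ bits indexed by some particular $\underline{Z}$, making $X_i$ constant on the fiber for that key; only the average over $\mathbf{Z}\stackrel{R}{\leftarrow}\mathbb{Z}_n^k$ can be small. The paper circumvents both problems with a second-moment argument: it defines $\underline\nu(i,\underline\alpha)=\bigl(\overline{\underline{Z}_1(i,\underline\alpha)},\ldots,\overline{\underline{Z}_K(i,\underline\alpha)}\bigr)$, shows via the Gram matrix $\Delta=\mathbf{V}^T\mathbf{V}$ (whose entries are discrepancies $d(\underline\nu(\underline\alpha_j\oplus\underline\alpha_{j'}))$, small except for at most $2^{0.544kn}$ unbalanced XORs per row) and Cauchy--Schwarz that for any fixed decoder-output vector $\underline{H}(i,\underline\eta)$ the set of strings with $\bigl|\overline{\underline{H}(i,\underline\eta)}\cdot\underline\nu(i,\underline\alpha)\bigr|\ge 2K2^{-k/6}$ has size at most $2^{0.544kn+k/3}$, which is then negligible inside a typical fiber of size $2^{0.548kn}$. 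Some counting device of this kind (or an extractor argument) is unavoidable; without replacing your pairing step by one, the proof does not go through in the general bounded storage model.
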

\begin{proof}
    See Section~\ref{Proof bounded Storage Model}.
\end{proof}

\begin{remark}
As long as the shared secret key $\mathbf{Z}$ is not revealed to the adversary, it can be reused to transmit new messages by using new random strings. In other words, the secret key is established between Alice and Bob once and for all messages transmitted.
%from $q$ to $l$ and from $l$ to $q$.
\end{remark}

\begin{remark}
In practice, $2^{-0.002kn+2}$ is negligibly small compared with $2^{-k/6+1}$, thus, during the informal discussions on the security of the protocol in the bounded storage model, we drop the negligible term $2^{-0.002kn+2}$. For example, for the parameters $n=2^{45}$,
 $m=2^{25}$, and ${k=300}$ the probability of distinguishing between the encryption of the two messages is upper-bounded by $2^{-23}$.
\end{remark}

\section{Proof of Security}\label{Proof bounded Storage Model}
To prove Theorem~\ref{‌Bounded-s-th}, first, we prove bit security (see Proposition~\ref{bounded-lemma-1} for details) demonstrating that if the adversary is given all but the $i$th bit of the message $\mathbf{M}$, the probability of correctly computing the $i$th bit of the final key $\mathbf{X}$ is exponentially small in the security parameter $k$. Subsequently, we establish the connection between bit security and semantic security. More specifically, we show that if the adversary can distinguish between messages $\mathbf{M}^0$ and $\mathbf{M}^1$, then he can compute the $i$th missing bit of the final key, contradicting the bit security (see Section~\ref{sec_proof_theo1} for details).

Next, we provide the formal attack model for the bit security. 
%is defined as follows: 
\begin{itemize}
 \item \textbf{Phase I}: The random string $\boldsymbol\alpha\stackrel{R}{\leftarrow}\{0,1\}^{kn}$ is broadcast. The adversary performs an arbitrary recording function $B_1(\boldsymbol\alpha):\{0,1\}^{kn}\rightarrow \{0,1\}^{\beta}$ on the random string $\boldsymbol\alpha$, computes ${{\boldsymbol\eta}=B_1(\boldsymbol\alpha)}$, and stores ${\boldsymbol\eta}$.
     \item \textbf{Phase II}: The adversary is provided with i) all but the $i$th bits of the final key $\mathbf{X}$, denoted as  $\mathbf{X}^{-i}$, ii) the output of Phase I, ${\boldsymbol\eta}$,  iii) the secret key $\mathbf{Z}$, and iv) infinite computing power and infinite storage space. Using the provided information, the adversary tries to compute the $i$th missing bit of the final key $\mathbf{X}$, using any decoding algorithm $B_2({\boldsymbol\eta},\mathbf{Z},\mathbf{X}^{-i})$.
\end{itemize}

%\begin{definition}
Let us show the $i$th bit of the final key $\mathbf{X}$ as $\mathbf{Z}(i,\boldsymbol{\alpha})$, i.e., $\mathbf{Z}(i,\boldsymbol{\alpha})\triangleq X_i$. This notation shows that the $i$th bit of the final key $\mathbf{X}$ is calculated by using the random string $\boldsymbol{\alpha}$ and the secret key $\mathbf{Z}$. Then,
the bit security of the protocol is presented in the following proposition.
\begin{proposition}[Bit security]\label{bounded-lemma-1}
        %Let $\underline{X}^{-i}$ denote all the bits of the final key $\underline{X}$ except the $i$th bit. Then, f
        For any recording function $  B_1(\boldsymbol\alpha):\{0,1\}^{kn}\rightarrow \{0,1\}^{\beta}$, for any decoding algorithm $B_2$, for $\boldsymbol\alpha\stackrel{R}{\leftarrow}\{0,1\}^{kn}$, and $\mathbf{Z}\stackrel{R}{\leftarrow} \mathbb{Z}_n^k$, we have
\begin{align}\label{bit-sec-m-le}
&\bigg|\mathrm{Pr}\left(B_2({\boldsymbol\eta},\mathbf{Z},\mathbf{X}^{-i})=\mathbf{Z}(i,\boldsymbol\alpha)\right)-\dfrac{1}{2}\bigg|<2^{-k/6}+2^{-0.002kn+1}.
\end{align}
\end{proposition}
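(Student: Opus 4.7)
The plan is to reduce Proposition~\ref{bounded-lemma-1} to a counting argument on the fibers of the recording function $B_1$. For each candidate stored value $\boldsymbol{\eta}\in\{0,1\}^{\beta}$ let $A_{\boldsymbol{\eta}}=B_1^{-1}(\boldsymbol{\eta})$. First I would partition outputs into ``small'' and ``large'' by thresholding $|A_{\boldsymbol{\eta}}|$ against the average value $2^{kn-\beta}=2^{0.55 kn}$: since there are only $2^{\beta}=2^{0.45 kn}$ possible $\boldsymbol{\eta}$, the number of $\boldsymbol{\alpha}$ lying in a fiber of size less than $2^{(0.55-\delta)kn}$ is at most $2^{(1-\delta)kn}$, so a uniform $\boldsymbol{\alpha}$ falls into the ``small'' union with probability at most $2^{-\delta kn}$. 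Tuning $\delta$ appropriately will produce the $2^{-0.002kn+1}$ term; the remaining contribution to the success probability comes from the large fibers.

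Conditioned on $\boldsymbol{\alpha}$ landing in a large fiber $A_{\boldsymbol{\eta}}$, its posterior is uniform on $A_{\boldsymbol{\eta}}$ and $\mathbf{Z}$ is still uniform on $\mathbb{Z}_n^k$ and independent of $\boldsymbol{\alpha}$. Because $B_2$ outputs a deterministic bit for each $(\boldsymbol{\eta},\mathbf{z},\mathbf{x}^{-i})$, the success probability inside the fiber decomposes as
\begin{align*}
\sum_{\mathbf{z},\mathbf{x}^{-i}} \Pr(\mathbf{Z}=\mathbf{z},\mathbf{X}^{-i}=\mathbf{x}^{-i}\mid\boldsymbol{\eta})\,\Pr\!\left(X_i=B_2(\boldsymbol{\eta},\mathbf{z},\mathbf{x}^{-i})\mid\boldsymbol{\eta},\mathbf{z},\mathbf{x}^{-i}\right).
\end{align*}
It therefore suffices to show that, for all but a $2^{-k/6}$-fraction of $(\mathbf{z},\mathbf{x}^{-i})$, the conditional distribution of $X_i$ given $(\boldsymbol{\eta},\mathbf{z},\mathbf{x}^{-i})$ lies within $2^{-k/6}$ of uniform on $\{0,1\}$; equivalently, inside $A_{\boldsymbol{\eta}}$ the slices $\{\boldsymbol{\alpha}:\mathbf{X}(\boldsymbol{\alpha},\mathbf{z})=(\mathbf{x}^{-i},0)\}$ and $\{\boldsymbol{\alpha}:\mathbf{X}(\boldsymbol{\alpha},\mathbf{z})=(\mathbf{x}^{-i},1)\}$ have nearly equal cardinality.

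The main obstacle is this balance estimate. The bit $X_i$ is the XOR of the $k$ coordinates $\boldsymbol{\alpha}^{(j)}[Z_j+(i-1)]$, while $\mathbf{X}^{-i}$ imposes $m-1$ additional parity constraints on the $mk$ bits indexed by $\mathcal{S}$. My strategy is to exploit the coordinate-wise independence and uniformity of $\mathbf{Z}$ on $\mathbb{Z}_n^k$: the $k$ indices selected by $\mathbf{S}^{(i)}$ are independent uniform offsets, and for a $(1-2^{-k/6})$-fraction of $\mathbf{z}$ this random selection is ``generic'' enough that, under the uniform measure on $A_{\boldsymbol{\eta}}$, the XOR of the selected bits is within $2^{-k/6}$ of balanced. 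I expect this to follow from a Chernoff-type concentration over the $k$ independent components of $\mathbf{Z}$, with the exponent $1/6$ emerging from the usual trade-off between the probability that $\mathbf{Z}$ falls in a ``bad'' selection set and the residual bias it still allows. Combining the two estimates via the triangle inequality then yields the claimed bound $|\Pr(B_2=X_i)-1/2|\le 2^{-k/6}+2^{-0.002 kn+1}$; matching the exact constants $\gamma=0.45$, the $1/6$ exponent, and the $0.002$ coefficient is a bookkeeping exercise once the balance estimate is in hand.
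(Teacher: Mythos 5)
Your high-level skeleton matches the paper's: the fibers of $B_1$ are split by size, the small-fiber event contributes the $2^{-0.002kn+1}$ term (this is exactly Lemma~\ref{preimage}, and your counting of small fibers is correct), and the remaining work is to show that inside any large fiber the predictor has advantage below $2^{-k/6}$ except on a negligible set. The problem is that this last step --- which you correctly identify as ``the main obstacle'' --- is precisely the entire technical content of the proposition, and the mechanism you propose for it does not work. A Chernoff-type bound over the $k$ independent coordinates of $\mathbf{Z}$ would require that, under the uniform measure on the fiber $A_{\boldsymbol\eta}=B_1^{-1}(\boldsymbol\eta)$, the selected bits $\boldsymbol\alpha^{(j)}[z_j+i-1]$ be (close to) independent across $j$, or at least individually near-unbiased. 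Neither holds: $A_{\boldsymbol\eta}$ is the fiber of an \emph{arbitrary} Boolean recording function, so it can impose arbitrary joint constraints and correlations across the blocks $\boldsymbol\alpha^{(1)},\ldots,\boldsymbol\alpha^{(k)}$. Handling arbitrary large subsets of $\{0,1\}^{kn}$ --- rather than subsets obtained by fixing original bits --- is exactly the difficulty that separates the general bounded storage model from Maurer's original analysis, and it is not resolved by independence of the key coordinates. Your argument as written would only cover recording functions that store original bits.

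What the paper does instead is a global second-moment argument that makes no reference to the structure of the fiber. It encodes the predictor as a fixed $\pm1$ vector $\overline{\underline{H}(i,\underline\eta)}$ indexed by the $K=n^k$ keys, encodes each string $\underline\alpha$ as the vector $\underline\nu(i,\underline\alpha)$ of signs of $X_i$ over all keys, and bounds, for the \emph{fixed} predictor, the number of strings $\underline\alpha$ with $|\overline{\underline{H}(i,\underline\eta)}\cdot\underline\nu(i,\underline\alpha)|\ge 2K2^{-k/6}$ by $2^{0.544kn+k/3}$ (Lemma~\ref{boundongoods}). The proof is via Cauchy--Schwarz applied to $\|\mathbf{V}\underline\xi\|^2=\underline\xi^T\Delta\,\underline\xi$, where the Gram matrix entries satisfy $|\delta_{j,j'}|=d(\underline\nu(i,\underline\alpha_j\oplus\underline\alpha_{j'}))$, and all but $2^{0.544kn}$ entries per row are at most $K2^{-k/3}$ because $\underline\alpha_j\oplus\underline\alpha_{j'}$ is a balanced string for all but that many $j'$ (Lemmas~\ref{lemma-bou-dv}--\ref{alphagooddelta}). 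Comparing $2^{0.544kn+k/3}$ against the fiber size $2^{0.548kn}$ then gives the $2^{-k/6}$ term. No ingredient in your proposal plays the role of this counting lemma, so the proof as outlined has a genuine gap at its central step. (A secondary issue: your decomposition conditions on $\mathbf{X}^{-i}$, which imposes $(m-1)k$ further parity constraints on $\boldsymbol\alpha$ inside the fiber; the paper sidesteps this via the independence statements in Lemmas~\ref{comp-ind-final} and~\ref{xizin} rather than by analyzing the doubly-constrained slices directly.)
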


\subsection{Proof of Proposition~\ref{bounded-lemma-1}}
The main idea behind the proof of bit security is to show that the number of random strings $\underline\alpha\in\{0,1\}^{kn}$ for which any decoding algorithm $B_2$ can provide a desirable result is negligible compared with the number of all random strings $\underline\alpha'$ for which we have $B_1(\underline\alpha')=\underline\eta$. %\cite{AumannZongRabin02}.

\begin{remark}
   It suffices to prove the theorem for the case where the recording function $B_1$ and the decoding algorithm $B_2$ are deterministic \cite{AumannZongRabin02,DziembowskiMaurer02}. This is because a randomized recording function is an algorithm that uses a random help string to compute its output. A randomized algorithm with a fixed help string gives rise to a deterministic algorithm \cite[Remark~3, Page~5]{AumannZongRabin02}. The same argument applies to the decoding algorithm.
\end{remark}

Next, we present some necessary definitions and preliminary results.

\begin{definition}
%Let $(\underline{Z}_1,\ldots,\underline{Z}_K)$ where $K=n^k$, denote an enumeration of all possible secret keys, and $(\underline\alpha_1,\cdots,\underline\alpha_N)$ where $N=2^{nk}$, denote an enumeration of all possible random strings of length $nk$.
Let $K=n^k$, $N=2^{nk}$, $(\underline{Z}_1,\ldots,\underline{Z}_K)$  denote an enumeration of all possible secret keys, and $(\underline\alpha_1,\ldots,\underline\alpha_N)$ denote an enumeration of all possible random strings of length $nk$.
\end{definition}

\begin{definition}
For a bit ${w\in\{0,1\}}$, we define $\bar{w}=(-1)^w$, and for a vector $\underline W=(w_1,\ldots,w_t)\in\{0,1\}^t$, we define ${\overline{\underline W}=(\bar{w}_1,\ldots,\bar{w}_t)}$.
\end{definition}

\begin{definition}
For a string $\underline\alpha\in\{0,1\}^{nk}$, we define $\underline\nu(i,\underline\alpha)=(\overline{\underline{Z}_1(i,\underline\alpha)},\ldots,\overline{\underline{Z}_K(i,\underline\alpha)})$.
%, where $\underline{Z}_j(i,\underline\alpha)$ was defined in \eqref{eq2ql}. 
We use the discrepancy function $d(\underline\nu(i,\underline\alpha))$ to measure the excess of ones over zeros, or vice versa, in the vector $\underline\nu(i,\underline\alpha)$, which is defined as
\begin{align}
   d(\underline\nu(i,\underline\alpha))=\bigg|\sum_{j=1}^{K}\overline{\underline{Z}_j(i,\underline\alpha)}\bigg|.
\end{align}
\end{definition}
%neither the fraction of ones nor that of zeros is less than 1/8
\begin{lemma}\label{lemma-bou-dv}
%For any $\underline\alpha\in\{0,1\}^{nk}$, such that the fractions of ones and zeros in $\underline\alpha^{(j)}$ for all $ 1\le j\le k$, are both no less than $1/8$, we have 
For any $\underline\alpha\in\{0,1\}^{nk}$, such that neither the fraction of ones nor that of zeros in $\underline\alpha$ 
%for all $ 1\le j\le k$, 
is  no less than $1/8$,\footnote{It is worth noting that $1/8$ is an appropriate number for $\gamma=0.45$,   for other values of $\gamma$ it needs to be changed. However, $1/8$ is not the only suitable number for $\gamma=0.45$; one can use another appropriate value for which the only difference would be the coefficient of the security parameter $k$ on the right-hand side of the inequality in \eqref{theo1ineq}.}
 we have 
\begin{align}
    d(\underline\nu(i,\underline\alpha))\le K2^{-k/3}.
\end{align}
\end{lemma}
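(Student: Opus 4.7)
The plan is to exploit the product form of $\overline{X_i}$ that arises directly from the XOR definition $X_i=\bigoplus_{j=1}^k\underline{\alpha}^{(j)}[z_j\pmb{+}(i-1)]$, and obtain the bound from the Hamming-weight imbalance of each sub-string $\underline{\alpha}^{(j)}$. Since $(-1)^{a\oplus b}=(-1)^a(-1)^b$, for every secret key $\underline{Z}=(z_1,\ldots,z_k)\in\mathbb{Z}_n^k$ we have $\overline{\underline{Z}(i,\underline{\alpha})}=\prod_{j=1}^k\overline{\underline{\alpha}^{(j)}[z_j\pmb{+}(i-1)]}$. Crucially, the $j$th coordinate of the secret key indexes only the $j$th sub-string, so summing over all $K=n^k$ keys factors coordinate-wise:
\begin{align*}
\sum_{\underline{Z}\in\mathbb{Z}_n^k}\overline{\underline{Z}(i,\underline{\alpha})}
=\prod_{j=1}^k\Bigl(\sum_{z_j=0}^{n-1}\overline{\underline{\alpha}^{(j)}[z_j\pmb{+}(i-1)]}\Bigr)
=\prod_{j=1}^k T_j,
\end{align*}
where $T_j:=\sum_{\ell=0}^{n-1}\overline{\underline{\alpha}^{(j)}[\ell]}$ (the inner sum is the same for every $i$ after re-indexing modulo $n$, since $z_j\mapsto z_j\pmb{+}(i-1)$ is a bijection on $\mathbb{Z}_n$). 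Because $\overline{0}=+1$ and $\overline{1}=-1$, $T_j$ equals the excess of zeros over ones in $\underline{\alpha}^{(j)}$, i.e.\ $T_j=n-2o_j$, where $o_j$ is the Hamming weight of $\underline{\alpha}^{(j)}$.

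Next, I translate the balance hypothesis---read as the requirement that in each sub-string $\underline{\alpha}^{(j)}$ neither the fraction of ones nor that of zeros drops below $1/8$---into $o_j/n\in[1/8,\,7/8]$, so that $|T_j|/n=|1-2o_j/n|\le 3/4$ for every $j$. Multiplying these $k$ per-sub-string inequalities,
\begin{align*}
d(\underline{\nu}(i,\underline{\alpha}))
=\Bigl|\prod_{j=1}^k T_j\Bigr|
=\prod_{j=1}^k|T_j|
\le (3n/4)^k
=K\,(3/4)^k.
\end{align*}
The proof is then completed by the numerical inequality $(3/4)^k\le 2^{-k/3}$, equivalent to $\log_2(4/3)\ge 1/3$, which holds since $(4/3)^3=64/27>2$. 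This yields $d(\underline{\nu}(i,\underline{\alpha}))\le K\,2^{-k/3}$, as claimed.

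The main conceptual step---and really the only nontrivial one---is recognizing the coordinate-wise factorization of the character sum over secret keys; once that is in hand, everything else is elementary arithmetic, and the threshold $1/8$ is tuned precisely so that $\log_2(4/3)$ comfortably exceeds $1/3$ (the footnote to the lemma already flags that other thresholds work, at the cost of the exponent). The subtlety I expect to have to be careful about is the interpretation of the balance hypothesis: it must be understood \emph{per sub-string}, because a merely aggregate balance condition on $\underline{\alpha}$ would still permit some $\underline{\alpha}^{(j)}$ to be all-zeros or all-ones (giving $|T_j|=n$) and would destroy the product bound.
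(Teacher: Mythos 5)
Your proof is correct and rests on the same underlying computation as the paper's: both write $d(\underline\nu(i,\underline\alpha))$ as $K$ times a product of per-coordinate biases, bound each bias by $3/4$ via the $1/8$ threshold, and finish with $(3/4)^3=27/64<1/2$. The difference is in rigor rather than in strategy. The paper obtains the bias as $|p'-p|^k$ by splitting $\mathrm{Pr}(\mathbf{Z}(i,\underline\alpha)=0)-\mathrm{Pr}(\mathbf{Z}(i,\underline\alpha)=1)$ into even/odd binomial sums with a \emph{single} global fraction $p$ of ones in $\underline\alpha$, which tacitly treats the $k$ sampled bits as identically distributed; your coordinate-wise factorization $\sum_{\underline Z}\overline{\underline Z(i,\underline\alpha)}=\prod_{j=1}^k T_j$ with $T_j=n-2o_j$ is the correct general form of that step, since $Z_j$ indexes only $\underline\alpha^{(j)}$ and the sub-strings need not share the same density of ones. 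Your closing caveat is also substantive, not pedantic: under the aggregate reading of the hypothesis the lemma is false (e.g., for $k$ even take half the sub-strings all-ones and half all-zeros; then $X_i$ is constant in $\underline Z$, so $d=K$, while the global fraction of ones is exactly $1/2$), so the balance condition must indeed be read per sub-string $\underline\alpha^{(j)}$ --- which is what the paper's proof implicitly assumes, and which matters downstream because the counting argument in Lemma~\ref{boundongoodalpha} bounds $|\mathcal{D}|$ using the aggregate condition on $o(\underline\alpha)$ rather than the per-sub-string one.
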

\begin{proof}
See Appendix~\ref{app_lem1}.
\end{proof}

 In the next lemma, which is derived from Lemma~\ref{lemma-bou-dv}, we show that for almost all ${\underline\alpha\in\{0,1\}^{nk}}$, we have $d(\underline\nu(i,\underline\alpha))\le K2^{-k/3}$.
\begin{lemma}\label{boundongoodalpha}
    Let $\mathcal{D}$ denote the set of strings $\underline\alpha\in\{0,1\}^{nk}$ for which $d(\underline\nu(i,\underline\alpha))> K2^{-k/3}$, i.e., $\mathcal{D}=\{\underline\alpha\in\{0,1\}^{nk}:d(\underline\nu(i,\underline\alpha))> K2^{-k/3}\}$. Then, an upper-bound on the cardinality of $\mathcal{D}$ is given as $|\mathcal{D}|<2^{0.544kn}$.
\end{lemma}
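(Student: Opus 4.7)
The plan is to invoke Lemma~\ref{lemma-bou-dv} in contrapositive form: if $\underline\alpha \in \mathcal{D}$, i.e., if the discrepancy exceeds $K 2^{-k/3}$, then $\underline\alpha$ must have either fewer than $kn/8$ ones or fewer than $kn/8$ zeros among its $kn$ bits. Hence $\mathcal{D}$ is contained in the union $\mathcal{D}_0 \cup \mathcal{D}_1$, where $\mathcal{D}_0$ (resp.\ $\mathcal{D}_1$) is the set of length-$kn$ strings with strictly fewer than $kn/8$ zeros (resp.\ ones). Reducing $|\mathcal{D}|$ to a counting problem about biased binary strings is the main conceptual step; everything else is a standard binomial estimate.

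Next, I would bound $|\mathcal{D}_0|$ and $|\mathcal{D}_1|$ by $\sum_{i=0}^{\lfloor kn/8 \rfloor} \binom{kn}{i}$. Applying the classical tail estimate $\sum_{i=0}^{\lfloor p N \rfloor} \binom{N}{i} \le 2^{N H(p)}$, valid for $p \le 1/2$ (with $H$ the binary entropy function), yields $|\mathcal{D}_0| = |\mathcal{D}_1| \le 2^{kn\, H(1/8)}$, which by symmetry gives $|\mathcal{D}| \le 2 \cdot 2^{kn\, H(1/8)}$.

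Finally, a short numerical check: $H(1/8) = \tfrac{3}{8} + \tfrac{7}{8}\log_2(8/7)$, which evaluates to roughly $0.5436$. Since $0.5436 < 0.544$, for $kn$ large enough (easily satisfied in the parameter regime considered in the paper), the extra factor of $2$ is absorbed into the exponent and one obtains the claimed strict bound $|\mathcal{D}| < 2^{0.544 kn}$. I expect the only subtlety is making the absorption of the factor $2$ rigorous: one has to explicitly note a mild lower bound on $kn$ (i.e., $kn > 1/(0.544 - H(1/8))$), otherwise the bound is only asymptotic. Apart from that, the argument is a direct counting application of Lemma~\ref{lemma-bou-dv}.
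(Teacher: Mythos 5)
Your proposal is correct and follows essentially the same route as the paper: the paper also reduces the problem, via the contrapositive of Lemma~\ref{lemma-bou-dv}, to counting strings whose number of ones is below $nk/8$ or above $7nk/8$, bounds the corresponding binomial tail by $2^{nk\,h(1/8)+1}$ (phrased there as a Stirling/entropy estimate on $\Pr(\boldsymbol\alpha\in\mathcal{D})$ multiplied by $2^{kn}$), and absorbs the leftover factor of $2$ into the slack $0.544-h(1/8)$ by assuming $kn$ is large, exactly the subtlety you flag.
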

\begin{proof}
See Appendix~\ref{app_lem2}.
\end{proof}

Next, by using Lemma~\ref{boundongoodalpha}, we present a lemma that is useful in the proof of bit security. 
\begin{lemma}\label{alphagooddelta}
Let $\mathbf{V}$ be the $K\times N$ matrix whose $j$th column is $\underline\nu(i,\underline\alpha_j)
^T,$ where $T$ denotes the transpose operation. Let $\Delta=\mathbf{V}^T\mathbf{V}$, and $\delta_{j,j'}$ denote the $j'$th element of the $j$th row of matrix $\Delta$. Then, for each fixed $j$, the number of elements $\delta_{j,j'}$ in the $j$th row such that $|\delta_{j,j'}|>K2^{
-k/3}$ is at most $2^{0.544kn}$.
\end{lemma}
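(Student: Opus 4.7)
The plan is to exploit the $\mathbb{F}_2$-linearity of the map $\underline{\alpha}\mapsto \underline{Z}_\ell(i,\underline{\alpha})$ in its second argument, so that every off-diagonal entry of $\Delta$ can be rewritten as a single-string discrepancy covered by Lemma~\ref{boundongoodalpha}. From equation~\eqref{eq2ql}, for a fixed key $\underline{Z}_\ell$ and a fixed index $i$, the bit $\underline{Z}_\ell(i,\underline{\alpha})$ is the XOR of $k$ coordinates of $\underline{\alpha}$ selected by the sub-key associated with $\underline{Z}_\ell$, and is therefore linear in $\underline{\alpha}$ over $\mathbb{F}_2$. Consequently, for every $\ell$ and every pair $(j,j')$,
\begin{align*}
\overline{\underline{Z}_\ell(i,\underline{\alpha}_j)}\cdot \overline{\underline{Z}_\ell(i,\underline{\alpha}_{j'})}=(-1)^{\underline{Z}_\ell(i,\underline{\alpha}_j)\oplus \underline{Z}_\ell(i,\underline{\alpha}_{j'})}=(-1)^{\underline{Z}_\ell(i,\underline{\alpha}_j\oplus \underline{\alpha}_{j'})}=\overline{\underline{Z}_\ell(i,\underline{\alpha}_j\oplus \underline{\alpha}_{j'})}.
\end{align*}

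Summing this identity over $\ell=1,\ldots,K$, the $(j,j')$ entry of $\Delta$ becomes
\begin{align*}
\delta_{j,j'}=\sum_{\ell=1}^{K}\overline{\underline{Z}_\ell(i,\underline{\alpha}_j)}\cdot \overline{\underline{Z}_\ell(i,\underline{\alpha}_{j'})}=\sum_{\ell=1}^{K}\overline{\underline{Z}_\ell(i,\underline{\alpha}_j\oplus \underline{\alpha}_{j'})},
\end{align*}
so by the definition of the discrepancy function $|\delta_{j,j'}|=d(\underline{\nu}(i,\underline{\alpha}_j\oplus \underline{\alpha}_{j'}))$. Hence, with $\mathcal{D}$ as in Lemma~\ref{boundongoodalpha}, the condition $|\delta_{j,j'}|>K2^{-k/3}$ is equivalent to $\underline{\alpha}_j\oplus \underline{\alpha}_{j'}\in\mathcal{D}$.

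Fixing $j$, I would then note that the map $j'\mapsto \underline{\alpha}_j\oplus \underline{\alpha}_{j'}$ is a bijection from $\{1,\ldots,N\}$ onto $\{0,1\}^{nk}$, because $\{\underline{\alpha}_{j'}\}_{j'=1}^{N}$ enumerates every binary string of length $nk$ and bitwise XOR with the fixed string $\underline{\alpha}_j$ is an involution on $\{0,1\}^{nk}$. Therefore, the number of indices $j'$ for which $|\delta_{j,j'}|>K2^{-k/3}$ equals the number of $\underline{\beta}\in\{0,1\}^{nk}$ lying in $\mathcal{D}$, and Lemma~\ref{boundongoodalpha} yields the announced bound $|\mathcal{D}|<2^{0.544kn}$.

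All the combinatorial content has already been absorbed into Lemmas~\ref{lemma-bou-dv} and~\ref{boundongoodalpha}; the only new step is the linearity reduction, and I expect no substantive obstacle beyond writing the identification $|\delta_{j,j'}|=d(\underline{\nu}(i,\underline{\alpha}_j\oplus \underline{\alpha}_{j'}))$ precisely, being careful that the $\mathbb{F}_2$-linearity of $\underline{Z}_\ell(i,\cdot)$ is applied component-wise before taking absolute values.
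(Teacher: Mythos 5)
Your proposal is correct and follows essentially the same route as the paper's own proof: both rewrite $\delta_{j,j'}$ via the $\mathbb{F}_2$-linearity of $\underline{Z}_\ell(i,\cdot)$ as the discrepancy $d(\underline\nu(i,\underline\alpha_j\oplus\underline\alpha_{j'}))$ of a single string, and then invoke the bijectivity of XOR with $\underline\alpha_j$ together with Lemma~\ref{boundongoodalpha} to bound the count by $|\mathcal{D}|<2^{0.544kn}$. No gaps.
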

\begin{proof}
 See Appendix~\ref{app_lem3}.
\end{proof}

In Lemma~\ref{comp-ind-final}, we show that knowing a portion of the final key $\mathbf{X}$ does not provide any information about the missing part.
\begin{lemma}\label{comp-ind-final}
For any $\underline{Z}\in \mathbb{Z}_n^k$, and for any $\boldsymbol\alpha\stackrel{R}{\leftarrow}\{0,1\}^{kn}$, components of $\mathbf{X}\in\{0,1\}^m$, i.e., $X_{1},\ldots,X_{m}$, are statistically independent.
\end{lemma}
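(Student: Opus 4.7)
The plan is to exploit the very explicit description of the final key in equation~\eqref{eq2ql}: once the secret key $\underline{Z}$ is fixed, each $X_i$ is a deterministic XOR of a specified set of bits of $\boldsymbol\alpha$, and I will argue that these sets are pairwise disjoint. Since the bits of $\boldsymbol\alpha$ are i.i.d.\ uniform, disjointness of the supports immediately gives independence.

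Concretely, I would first fix $\underline Z=(z_1,\dots,z_k)\in\mathbb Z_n^k$ and, for each $i\in\{1,\dots,m\}$, define the index set $\mathcal I_i=\{(j,\,z_j\pmb{+}(i-1))\mid 1\le j\le k\}\subset\{1,\dots,k\}\times\mathbb Z_n$, so that by~\eqref{eq2ql}
\begin{equation*}
X_i=\bigoplus_{(j,\ell)\in\mathcal I_i}\boldsymbol\alpha^{(j)}[\ell].
\end{equation*}
The key combinatorial observation is that the sets $\mathcal I_1,\dots,\mathcal I_m$ are pairwise disjoint. Indeed, since the message length satisfies $m<n$ (stated in the paragraph before the protocol), the values $i-1$ for $i=1,\dots,m$ are pairwise distinct modulo $n$; hence for any fixed $j$, the positions $z_j\pmb{+}(i-1)$ are pairwise distinct as $i$ varies, which means no pair $(j,\ell)$ can simultaneously lie in $\mathcal I_i$ and $\mathcal I_{i'}$ for $i\neq i'$.

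Given this disjointness, the union $\bigcup_{i=1}^m\mathcal I_i$ picks out $mk$ distinct bits of $\boldsymbol\alpha$, which are mutually independent and uniform on $\{0,1\}$ because the bits of $\boldsymbol\alpha$ are i.i.d.\ uniform by assumption. Since $X_1,\dots,X_m$ are (deterministic) functions of pairwise disjoint subsets of an independent family of random variables, they are mutually statistically independent. I would finish by noting that the argument holds for \emph{every} fixed $\underline Z$, so in particular it holds almost surely over $\mathbf Z$, which is exactly the statement of the lemma.

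There is really no serious obstacle here: the entire proof reduces to observing that the sub-key shifts $Z_j\pmb{+}(i-1)$ take distinct values in $\mathbb Z_n$ as $i$ ranges over $\{1,\dots,m\}$ (thanks to $m<n$), together with the elementary fact that functions of disjoint independent variables are independent. The only subtlety worth flagging explicitly in the write-up is that the hypothesis $m<n$ is essential — it is exactly what prevents a wrap-around collision $z_j\pmb{+}(i-1)\equiv z_j\pmb{+}(i'-1)\pmod n$ and thus guarantees the disjointness of the $\mathcal I_i$'s.
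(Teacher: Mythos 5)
Your proposal is correct and follows essentially the same route as the paper: both arguments reduce the lemma to showing that, for a fixed key, the sets of $\boldsymbol\alpha$-bits feeding $X_i$ and $X_{i'}$ are disjoint (the paper phrases this as the difference $\mathbf{S}^{(i)}-\mathbf{S}^{(i')}=(i-i')\bm{1}$ having all non-zero components, which is your ``distinct shifts modulo $n$'' observation), and then invoke independence of functions of disjoint independent bits. Your explicit flagging of the role of $m<n$ is a welcome clarification of a hypothesis the paper uses only implicitly.
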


\begin{proof} 
See Appendix~\ref{app_lem4}.
\end{proof}

Let ${\mathbf{X}^{-i}=(X_1,\ldots,X_{i-1},X_{i+1},\ldots,X_m)}$ denote all the bits of the final key $\mathbf{X}$ except the $i$th bit. The following lemma shows that knowing $\mathbf{X}^{-i}$ does not provide any information on the secret key $\mathbf{Z}$. More specifically, the mutual information between $\mathbf{X}^{-i}$ and $\mathbf{Z}$ is zero, i.e., $I(\mathbf{Z};\mathbf{X}^{-i})=0$.   
\begin{lemma}\label{xizin}
  For any $\boldsymbol\alpha\stackrel{R}{\leftarrow}\{0,1\}^{kn}$, for any $i\in\{1,\cdots,m\}$, and for any $\mathbf{Z}\stackrel{R}{\leftarrow} \mathbb{Z}_n^k$, the secret key $\mathbf{Z}$ is statistically independent of $\mathbf{X}^{-i}$. More specifically, 
   $\mathrm{Pr}(\mathbf{X}^{-i}|\mathbf{Z})=\mathrm{Pr}(\mathbf{X}^{-i}).$
\end{lemma}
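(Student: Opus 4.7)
The plan is to show that the conditional distribution of $\mathbf{X}^{-i}$ given any fixed value $\underline{Z}$ of the secret key is uniform on $\{0,1\}^{m-1}$; since this conditional distribution does not depend on $\underline{Z}$, the marginal distribution of $\mathbf{X}^{-i}$ will also be uniform and independence will follow immediately.

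First, I would invoke Lemma~\ref{comp-ind-final}, which guarantees that for every fixed $\underline{Z}\in\mathbb{Z}_n^k$ the bits $X_1,\ldots,X_m$ are mutually independent as functions of the random string $\boldsymbol{\alpha}$. Next, I would compute the conditional law of a single bit: for fixed $\underline{Z}$ and fixed $j$, the bit $X_j=\bigoplus_{\ell=1}^{k}\boldsymbol{\alpha}^{(\ell)}[Z_\ell\pmb{+}(j-1)]$ is the XOR of $k$ bits drawn from the $k$ distinct substrings $\boldsymbol{\alpha}^{(1)},\ldots,\boldsymbol{\alpha}^{(k)}$. Because the protocol uses the shift $Z_\ell\pmb{+}(j-1)$ only to pick an index within $\boldsymbol{\alpha}^{(\ell)}$, these $k$ input bits lie in pairwise disjoint substrings and are therefore i.i.d.\ uniform Bernoulli, making $X_j$ uniform on $\{0,1\}$ for every $\underline{Z}$.

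Combining these two observations, for every $\underline{Z}\in\mathbb{Z}_n^k$ and every $\underline{w}\in\{0,1\}^{m-1}$ one has $\mathrm{Pr}(\mathbf{X}^{-i}=\underline{w}\mid\mathbf{Z}=\underline{Z})=2^{-(m-1)}$. Averaging this identity over the uniformly distributed $\mathbf{Z}$ yields $\mathrm{Pr}(\mathbf{X}^{-i}=\underline{w})=2^{-(m-1)}$, so the conditional and marginal distributions coincide and the desired identity $\mathrm{Pr}(\mathbf{X}^{-i}\mid\mathbf{Z})=\mathrm{Pr}(\mathbf{X}^{-i})$ follows. There is no real obstacle here; the only point requiring care is to keep $\mathbf{Z}$ fixed while exploiting the uniformity of $\boldsymbol{\alpha}$, so that the XOR-uniformity step is applied under the correct conditional measure rather than being tangled with randomization over the key.
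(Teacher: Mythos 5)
Your proof is correct and follows essentially the same route as the paper's: condition on a fixed key $\underline{Z}$, factor the conditional law of $\mathbf{X}^{-i}$ into per-bit terms via Lemma~\ref{comp-ind-final}, observe that each term does not depend on $\underline{Z}$, and average over $\mathbf{Z}$. The only cosmetic difference is that you specialize to bit-probability $1/2$ and conclude the conditional law is uniform on $\{0,1\}^{m-1}$, whereas the paper writes each factor for a general bit-bias $p_1$ and uses only that it is independent of $\underline{Z}$.
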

\begin{proof}
See Appendix~\ref{app_lem5}.
\end{proof}

%\begin{proof}
%To prove Proposition~\ref{bounded-lemma-1}, first,
Now, we define a condition under which we say the decoding algorithm $B_2$ using the output of Phase~I is good \cite[Definition~7]{AumannZongRabin02}.
Then, we show that the number of random strings for which $B_2$ is good is very small compared to the number of strings $\underline\alpha\in\{0,1\}^{nk}$ for which $B_1(\underline\alpha)=\underline\eta$.
%\end{proof}

\begin{definition}[Goodness of the decoding algorithm $B_2$]\label{gooddef} For a string $\underline\alpha\in\{0,1\}^{nk}$, and a fixed ${\underline\eta\in\{0,1\}^\beta}$, derived from Phase I, we say that the decoding algorithm $B_2$ using $\underline\eta$ is good for $\underline\alpha$ if for $\mathbf{Z}\stackrel{R}{\leftarrow} \mathbb{Z}_n^k$, we have 
\begin{align}\label{goodness-f-def}
\bigg|\mathrm{Pr}\left(B_2(\underline\eta,\mathbf{Z},\underline{X}^{-i})=\mathbf{Z}(i,\underline\alpha)\right)-\dfrac{1}{2}\bigg|\ge2^{-k/6}.
\end{align}
\end{definition}
It will turn out that an equivalent inequality as \eqref{goodness-f-def}, which is presented in Lemma~\ref{equi-good-def}, is useful in the proof of bit security. Before presenting Lemma~\ref{equi-good-def}, we define the enumeration of all outputs of the decoding algorithm $B_2$ for all possible secret keys in the following. 
\begin{definition}
For a fixed $\underline{\eta} \in \{0,1\}^\beta$, we define $\underline{H}(i,\underline{\eta}) = (B_2(\underline{\eta},\underline{Z}_1,\underline{X}^{-i}),\ldots,$$B_2(\underline{\eta},\underline{Z}_K,\underline{X}^{-i}))$, i.e., the enumeration of all outputs of the decoding algorithm $B_2$ for all possible secret keys.
%For a fixed $\underline\eta\in\{0,1\}^\beta$, we define $\underline{H}(i,\underline\eta)=(B_2(\underline\eta,\underline{Z}_1,\underline{X}^{-i}),\ldots,B_2(\underline\eta,\underline{Z}_K,\underline{X}^{-i}))$, i.e., the enumeration of all outputs of the decoding algorithm $B_2$ for all possible secret keys.
\end{definition}
In the following lemma, by using $\underline{H}(i,\underline\eta)$, we provide another form of goodness definition.
%, i.e., we provide an inequality that is equivalent to \eqref{goodness-f-def}. 
\begin{lemma}\label{equi-good-def}
    For a string $\underline\alpha\in\{0,1\}^{nk}$, and a fixed ${\underline\eta\in\{0,1\}^\beta}$, the decoding algorithm $B_2$ using $\underline\eta$ is good for $\underline\alpha$ if 
    %for $\mathbf{Z}\stackrel{R}{\leftarrow} \mathbb{Z}_n^k$, 
   % we have
    \begin{align}
        \left|\overline{\underline{H}(i,\underline\eta)}.\underline\nu(i,\underline\alpha)\right|\ge\dfrac{2K}{2^{k/6}}.
    \end{align}
\end{lemma}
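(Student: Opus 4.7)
The plan is to observe that Lemma~\ref{equi-good-def} is a pure algebraic reformulation of Definition~\ref{gooddef}, obtained by encoding the bits $\{0,1\}$ as $\{+1,-1\}$ via the paper's map $w\mapsto\bar w=(-1)^w$. The key identity underlying this reformulation is that for any two bits $a,b\in\{0,1\}$, $\mathds{1}[a=b]=(1+\bar a\bar b)/2$, so equality of bits translates to positivity of a $\pm 1$ product.

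First, I would expand the probability appearing in \eqref{goodness-f-def} as a uniform average over the enumeration of secret keys. Since $\mathbf{Z}\stackrel{R}{\leftarrow}\mathbb{Z}_n^k$ and $\underline{Z}_1,\ldots,\underline{Z}_K$ exhausts all $K=n^k$ elements of $\mathbb{Z}_n^k$,
\begin{align*}
\mathrm{Pr}\bigl(B_2(\underline\eta,\mathbf{Z},\underline{X}^{-i})=\mathbf{Z}(i,\underline\alpha)\bigr)=\frac{1}{K}\sum_{j=1}^K\mathds{1}\bigl[B_2(\underline\eta,\underline{Z}_j,\underline{X}^{-i})=\underline{Z}_j(i,\underline\alpha)\bigr].
\end{align*}
Applying the $\pm 1$ identity termwise and separating out the constant $1/2$ gives
\begin{align*}
\mathrm{Pr}(\cdot)-\tfrac{1}{2}=\frac{1}{2K}\sum_{j=1}^K\overline{B_2(\underline\eta,\underline{Z}_j,\underline{X}^{-i})}\cdot\overline{\underline{Z}_j(i,\underline\alpha)}.
\end{align*}
By the very definitions of $\underline{H}(i,\underline\eta)$ and $\underline\nu(i,\underline\alpha)$ introduced just before the lemma, the sum on the right is exactly the Euclidean inner product $\overline{\underline{H}(i,\underline\eta)}\cdot\underline\nu(i,\underline\alpha)$.

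Taking absolute values and clearing the factor $1/(2K)$, the goodness condition $|\mathrm{Pr}(\cdot)-1/2|\ge 2^{-k/6}$ is equivalent to $|\overline{\underline{H}(i,\underline\eta)}\cdot\underline\nu(i,\underline\alpha)|\ge 2K/2^{k/6}$, which is the claim of the lemma. The only real subtlety, and hence the principal (rather mild) obstacle, is a bookkeeping point: since $\underline{X}^{-i}$ depends on both the key and the string, the $j$-th coordinate of the enumeration $\underline{H}(i,\underline\eta)$ must be interpreted with the $\underline{X}^{-i}$ induced by the specific key $\underline{Z}_j$ together with the fixed $\underline\alpha$, so that the termwise application of the $\pm 1$ identity lines up cleanly with the probability sum. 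Once this dependence is made explicit, the remainder is immediate algebra.
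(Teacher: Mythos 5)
Your proposal is correct and follows essentially the same route as the paper: both arguments reduce to the identity $\mathrm{Pr}\left(B_2(\underline\eta,\mathbf{Z},\underline{X}^{-i})=\mathbf{Z}(i,\underline\alpha)\right)-\tfrac{1}{2}=\tfrac{1}{2K}\,\overline{\underline{H}(i,\underline\eta)}\cdot\underline\nu(i,\underline\alpha)$, which the paper obtains by counting agreements versus disagreements between the two $\pm1$ vectors and handling the positive and negative cases separately, while you obtain it in one stroke via the indicator identity and absolute values. Your remark about interpreting the $j$th coordinate of $\underline{H}(i,\underline\eta)$ with the $\underline{X}^{-i}$ induced by $\underline{Z}_j$ and the fixed $\underline\alpha$ is a fair and worthwhile clarification of the paper's notation, but it does not change the substance of the argument.
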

\begin{proof}
See Appendix~\ref{app_lem7}.
\end{proof}

Next, we derive an upper-bound on the number of random strings $\underline\alpha\in\{0,1\}^{nk}$ for which the decoding algorithm $B_2$ using $\underline\eta$ is good. Note that here we consider all possible random strings ${\underline\alpha\in\{0,1\}^{nk}}$, regardless of if $B_1(\underline\alpha)=\underline\eta$ or not.
\begin{lemma}\label{boundongoods}
    Let $L_{\underline{H}(i,\underline\eta)}$ denote the set of all random strings for which $B_2$ using $\underline\eta$ is good, i.e.,
    \begin{align}
        L_{\underline{H}(i,\underline\eta)}=\left\{\underline\alpha\in\{0,1\}^{nk}:\big|\overline{\underline{H}(i,\underline\eta)}.\underline\nu(i,\underline\alpha)\big|\ge\dfrac{2K}{2^{k/6}}\right\}.
    \end{align}
    Then, an upper-bound on the cardinality of $L_{\underline{H}(i,\underline\eta)}$ is given as $|L_{\underline{H}(i,\underline\eta)}|<2^{0.544nk+k/3}$.
\end{lemma}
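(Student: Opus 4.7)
The plan is to reduce the cardinality bound to a quadratic-form estimate on the Gram matrix $\Delta = \mathbf{V}^T\mathbf{V}$ studied in Lemma~\ref{alphagooddelta}. Let $L = |L_{\underline{H}(i,\underline\eta)}|$, and for each $j$ with $\underline\alpha_j \in L_{\underline{H}(i,\underline\eta)}$ let $\epsilon_j \in \{-1,+1\}$ denote the sign of $\overline{\underline{H}(i,\underline\eta)} \cdot \underline\nu(i,\underline\alpha_j)$. By the goodness condition, $\epsilon_j \, \overline{\underline{H}(i,\underline\eta)} \cdot \underline\nu(i,\underline\alpha_j) \ge 2K/2^{k/6}$ for every such $j$, so summing over these $L$ indices and pulling $\overline{\underline{H}(i,\underline\eta)}$ outside the sum yields the single aggregated inequality
\[
\overline{\underline{H}(i,\underline\eta)} \cdot \sum_{j} \epsilon_j \underline\nu(i,\underline\alpha_j) \ge \frac{2 K L}{2^{k/6}},
\]
where the sum runs over the $L$ indices with $\underline\alpha_j \in L_{\underline{H}(i,\underline\eta)}$.

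Next I would apply the Cauchy--Schwarz inequality to this left-hand side. Since $\overline{\underline{H}(i,\underline\eta)}$ is a $\pm 1$ vector of length $K$ with Euclidean norm $\sqrt{K}$, squaring converts the previous inequality into the quadratic-form estimate
\[
\frac{4 L^2 K}{2^{k/3}} \le \bigg\| \sum_{j} \epsilon_j \underline\nu(i,\underline\alpha_j) \bigg\|^2 = \sum_{j, j'} \epsilon_j \epsilon_{j'} \delta_{j,j'} \le \sum_{j, j'} |\delta_{j,j'}|,
\]
where the double sum now ranges over $j, j'$ with $\underline\alpha_j, \underline\alpha_{j'} \in L_{\underline{H}(i,\underline\eta)}$.

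The heart of the argument is the estimate of $\sum_{j,j'} |\delta_{j,j'}|$ via Lemma~\ref{alphagooddelta}. For each fixed $j$ I would split the inner sum over $j'$ into two pieces: the at most $2^{0.544kn}$ indices $j'$ with $|\delta_{j,j'}| > K 2^{-k/3}$, each contributing at most the trivial bound $K$ available because $\underline\nu$ is $\pm 1$-valued of length $K$; and the remaining at most $L$ indices, each contributing at most $K 2^{-k/3}$. Summing over the at most $L$ outer indices $j$ in $L_{\underline{H}(i,\underline\eta)}$ then gives
\[
\sum_{j, j'} |\delta_{j,j'}| \le L \left( 2^{0.544kn} K + L K 2^{-k/3} \right).
\]

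Substituting this back into the Cauchy--Schwarz estimate yields $4 L^2 / 2^{k/3} \le L \big( 2^{0.544kn} + L \, 2^{-k/3} \big)$; cancelling one factor of $L$ and moving the $L \, 2^{-k/3}$ term to the left side leaves $3 L \cdot 2^{-k/3} \le 2^{0.544kn}$, and hence $L < 2^{0.544nk + k/3}$, as required. The step I expect to be the main obstacle is precisely the two-part splitting in the previous paragraph: the cardinality bound of Lemma~\ref{alphagooddelta} is only useful if the ``small-entry'' contribution $L^2 K 2^{-k/3}$ does not swamp the ``large-entry'' contribution $L \, 2^{0.544kn} K$, and verifying that these two pieces balance to deliver exactly the target exponent $0.544nk + k/3$ is the delicate accounting on which the whole bound rests.
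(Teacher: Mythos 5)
Your proof is correct and follows essentially the same route as the paper: Cauchy--Schwarz against the Gram matrix $\Delta=\mathbf{V}^T\mathbf{V}$, followed by the large-entry/small-entry split of $\sum_{j,j'}|\delta_{j,j'}|$ using Lemma~\ref{alphagooddelta}, and the same cancellation $3L\,2^{-k/3}\le 2^{0.544kn}$. The only difference is cosmetic but slightly cleaner: by attaching signs $\epsilon_j$ to the strings in $L_{\underline{H}(i,\underline\eta)}$ you treat both signs of the inner product in one pass (obtaining $L\le \tfrac{1}{3}2^{0.544kn+k/3}$), whereas the paper uses a $0/1$ indicator vector $\underline\xi$, runs the argument separately for $L^+_{\underline{H}(i,\underline\eta)}$ and $L^-_{\underline{H}(i,\underline\eta)}$, and adds the two bounds.
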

\begin{proof}
See Appendix~\ref{app_lem8}.  
\end{proof}

The main step of the proof is to show that the number of strings for which $B_2$ using $\underline\eta$ is good, i.e., $|L_{\underline{H}(i,\underline\eta)}|$, is very small compared to the number of strings $\underline\alpha\in\{0,1\}^{nk}$ for which $B_1(\underline\alpha)=\underline\eta,$ i.e., the pre-image of $\underline\eta$ under the recording function $B_1$. The pre-image of $\underline\eta$ is given as
\begin{align}
B_1^{-1}(B_1(\underline\alpha))=\left\{\underline\theta\in\{0,1\}^{nk}:B_1(\underline\theta)=B_1(\underline\alpha)\right\}.
\end{align}
In the following lemma, we show that for all but a tiny fraction of strings $\underline\alpha\in\{0,1\}^{nk}$, the pre-image of $\underline\eta$ under $B_1$, i.e., $B_1^{-1}(B_1(\underline\alpha))$, contains at least $2^{0.548kn}$ strings in $\{0,1\}^{nk}$.
\begin{lemma}\label{preimage}
For any recording function $B_1(\boldsymbol\alpha):\{0,1\}^{kn}\rightarrow \{0,1\}^{\beta}$, for any $\boldsymbol\alpha\stackrel{R}{\leftarrow}\{0,1\}^{kn}$, we have 
\begin{align}
\mathrm{Pr}\left(\left|B_1^{-1}(B_1(\boldsymbol\alpha))\right|< 2^{0.548kn}\right)\le 2^{-0.002kn}.
\end{align}
\end{lemma}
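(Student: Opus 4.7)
The plan is to prove this by a direct counting argument, exploiting the fact that $B_1$ is a function with a small range. Let me first identify the "bad" event as a set. Define
\[
\mathcal{B} = \left\{\underline\alpha \in \{0,1\}^{kn} : |B_1^{-1}(B_1(\underline\alpha))| < 2^{0.548kn}\right\}.
\]
Since $\boldsymbol\alpha$ is drawn uniformly at random from $\{0,1\}^{kn}$, the probability in question equals $|\mathcal{B}|/2^{kn}$, so it suffices to establish an upper bound on $|\mathcal{B}|$.

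The key observation is that $\mathcal{B}$ has a natural description as a union of pre-images: it consists precisely of those $\underline\alpha$ whose image $B_1(\underline\alpha)$ lies in the set of outputs with small pre-image under $B_1$. Formally, letting $\mathcal{T} = \{\underline\eta \in \{0,1\}^\beta : |B_1^{-1}(\underline\eta)| < 2^{0.548kn}\}$, we have $\mathcal{B} = \bigcup_{\underline\eta \in \mathcal{T}} B_1^{-1}(\underline\eta)$. These pre-images are disjoint (since $B_1$ is a function), so
\[
|\mathcal{B}| = \sum_{\underline\eta \in \mathcal{T}} |B_1^{-1}(\underline\eta)| < |\mathcal{T}| \cdot 2^{0.548kn}.
\]

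Now I bound $|\mathcal{T}|$ using the range constraint: since $B_1$ maps into $\{0,1\}^\beta$ with $\beta = \gamma kn = 0.45kn$, we have $|\mathcal{T}| \leq 2^\beta = 2^{0.45kn}$. Combining,
\[
|\mathcal{B}| < 2^{0.45kn} \cdot 2^{0.548kn} = 2^{0.998kn}.
\]
Dividing by the total number $2^{kn}$ of random strings yields
\[
\mathrm{Pr}\left(\left|B_1^{-1}(B_1(\boldsymbol\alpha))\right| < 2^{0.548kn}\right) = \frac{|\mathcal{B}|}{2^{kn}} < 2^{-0.002kn},
\]
which is the claimed inequality.

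There is essentially no obstacle here: the lemma is a pigeonhole-style bookkeeping argument, and the constants $0.548$ and $0.002$ are calibrated precisely so that $0.45 + 0.548 + 0.002 = 1$. The only thing worth noting is that the argument is tight with respect to $\gamma$: the $0.548$ is chosen as $1 - \gamma - 0.002$, which explains Footnote~3 in Lemma~\ref{lemma-bou-dv} about the choice of $1/8$ being tied to $\gamma = 0.45$. The slack of $0.002 kn$ in the exponent is what gets absorbed into the failure probability, ensuring that outside of an exponentially small fraction of $\underline\alpha$'s, the pre-image of $\underline\eta = B_1(\underline\alpha)$ is large enough to dominate $|L_{\underline{H}(i,\underline\eta)}|$ from Lemma~\ref{boundongoods} in the subsequent proof of bit security.
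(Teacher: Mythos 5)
Your proof is correct and is essentially the same counting argument the paper uses: it partitions $\{0,1\}^{kn}$ into the $2^{\beta}$ fibers of $B_1$, sums the sizes of the ``small'' fibers, and bounds this by $2^{0.548kn}\cdot 2^{0.45kn}$ before dividing by $2^{kn}$. Your closing observation that the constants satisfy $0.45+0.548+0.002=1$ is exactly the calibration underlying the paper's choice of exponents.
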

\begin{proof}
See Appendix~\ref{app_lem9}.
\end{proof}
Next, by using Lemmas~\ref{boundongoods} and \ref{preimage}, we show that the probability that the decoding algorithm $B_2$ using $B_1(\boldsymbol\alpha)$ is good for the random string $\boldsymbol\alpha$, is exponentially small in $kn$.

\begin{lemma}\label{probalphaisgood}
    For any recording function $B_1(\boldsymbol\alpha):\{0,1\}^{kn}\rightarrow \{0,1\}^{\beta}$,  any decoding algorithm $B_2$, any $\boldsymbol\alpha\stackrel{R}{\leftarrow}\{0,1\}^{kn}$, we have 
\begin{align}\label{alphinlh}
\mathrm{Pr}\left(\left|\overline{{\mathbf{H}}(i,{\boldsymbol\eta})}.{\boldsymbol\nu}(i,\boldsymbol\alpha)\right|\ge\dfrac{2K}{2^{k/6}}\right)\le 2^{-0.002kn+1}.
\end{align}
%Note that \eqref{alphinlh} can be written as ${\mathrm{Pr}\left(\boldsymbol\alpha\in L_{\underline{H}(i,\eta)}\right)\le 2^{-0.002kn+1}}$.
\end{lemma}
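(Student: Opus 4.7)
The plan is to translate the probability bound into a counting statement about the pre-image of $B_1(\boldsymbol\alpha)$, and then combine the two preceding lemmas. By Lemma~\ref{equi-good-def}, the event $\left|\overline{\mathbf{H}(i,\boldsymbol\eta)}.\boldsymbol\nu(i,\boldsymbol\alpha)\right|\ge 2K/2^{k/6}$ is exactly the event that $\boldsymbol\alpha$ lies in the good set $L_{\underline H(i,B_1(\boldsymbol\alpha))}$. So it suffices to upper bound $\mathrm{Pr}(\boldsymbol\alpha\in L_{\underline H(i,B_1(\boldsymbol\alpha))})$.

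The idea is to split on the size of the pre-image $B_1^{-1}(B_1(\boldsymbol\alpha))$. Writing $\mathcal{E}$ for the event of interest, I would decompose
\begin{align*}
\mathrm{Pr}(\mathcal{E}) \le \mathrm{Pr}\!\left(\left|B_1^{-1}(B_1(\boldsymbol\alpha))\right|<2^{0.548kn}\right) + \mathrm{Pr}\!\left(\mathcal{E}\cap \left\{\left|B_1^{-1}(B_1(\boldsymbol\alpha))\right|\ge 2^{0.548kn}\right\}\right).
\end{align*}
Lemma~\ref{preimage} bounds the first term by $2^{-0.002kn}$. For the second term, I would condition on the value $\underline\eta=B_1(\boldsymbol\alpha)$: since $\boldsymbol\alpha$ is uniform on $\{0,1\}^{kn}$, it is uniform on $B_1^{-1}(\underline\eta)$ given the conditioning, so the conditional probability of $\boldsymbol\alpha\in L_{\underline H(i,\underline\eta)}$ is at most $|L_{\underline H(i,\underline\eta)}|/|B_1^{-1}(\underline\eta)|$. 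Using Lemma~\ref{boundongoods} in the numerator and the assumed lower bound $2^{0.548kn}$ in the denominator gives a conditional bound of $2^{0.544kn+k/3}/2^{0.548kn}=2^{-0.004kn+k/3}$ uniformly in $\underline\eta$, hence the same unconditional bound on the second term.

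Adding the two contributions yields $\mathrm{Pr}(\mathcal{E})\le 2^{-0.002kn}+2^{-0.004kn+k/3}$, and for the regime of interest (where $n$ is taken sufficiently large relative to $k$, concretely $n\ge 500/3$, so that $k/3\le 0.002kn$) the second summand is also at most $2^{-0.002kn}$. Combining gives the claimed bound $2^{-0.002kn+1}$.

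The main obstacle I anticipate is not any deep idea but the bookkeeping around the conditioning step: one must use that uniform $\boldsymbol\alpha$ on $\{0,1\}^{kn}$ induces uniformity on each fiber $B_1^{-1}(\underline\eta)$, so the counting bound from Lemma~\ref{boundongoods} can be turned into a probability. A minor but necessary check is the parameter inequality $k/3\le 0.002kn$ needed to collapse the two exponential terms into a single $2^{-0.002kn+1}$; otherwise the result would have to be stated with an explicit mild lower bound on $n$.
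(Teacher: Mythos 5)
Your proposal is correct and follows essentially the same route as the paper: split on whether the fiber $B_1^{-1}(B_1(\boldsymbol\alpha))$ has size at least $2^{0.548kn}$, bound the small-fiber event by Lemma~\ref{preimage}, bound the conditional probability on large fibers by $|L_{\underline{H}(i,\underline\eta)}|/|B_1^{-1}(\underline\eta)|\le 2^{0.544kn+k/3}/2^{0.548kn}$ via Lemma~\ref{boundongoods}, and absorb $2^{-0.004kn+k/3}$ into $2^{-0.002kn}$ using that $n$ is large. Your explicit remarks about uniformity on each fiber and the parameter condition $k/3\le 0.002kn$ (i.e., $n\ge 500/3$) are exactly the points the paper handles implicitly in steps $(b)$ and $(c)$ of its proof.
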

\begin{proof}
See Appendix~\ref{app_lem10}.
\end{proof}
Now, we are ready to prove Proposition~\ref{bounded-lemma-1}, i.e., bit security. The probability that the adversary can compute the missing bit $X_i$ by using the output of Phase I, i.e., ${\boldsymbol{\eta}}$, the secret key $\mathbf{Z}$, and all other bits of the final key  $\mathbf{X}^{-i}$, is given as
\begin{align}\nonumber
 &\mathrm{Pr}\left(B_2({\boldsymbol{\eta}},\mathbf{Z},\mathbf{X}^{-i})=\mathbf{Z}(i,\boldsymbol\alpha)\right)=\\&\nonumber\mathrm{Pr}\left(B_2({\boldsymbol{\eta}},\mathbf{Z},\mathbf{X}^{-i})=\mathbf{Z}(i,\boldsymbol\alpha)\bigg|\boldsymbol\alpha\in L_{{\mathbf{H}}(i,{\boldsymbol{\eta}})}\right)\mathrm{Pr}\left(\boldsymbol\alpha\in L_{{\mathbf{H}}(i,{\boldsymbol{\eta}})}\right)+\\&\label{p-cal-mis}
 \mathrm{Pr}\left(B_2({\boldsymbol{\eta}},\mathbf{Z},\mathbf{X}^{-i})=\mathbf{Z}(i,\boldsymbol\alpha)\bigg|\boldsymbol\alpha\notin L_{{\mathbf{H}}(i,{\boldsymbol{\eta}})}\right)\mathrm{Pr}\left(\boldsymbol\alpha\notin L_{{\mathbf{H}}(i,{\boldsymbol{\eta}})}\right).
\end{align}
 Next, by using \eqref{p-cal-mis}, we derive an upper-bound and a lower-bound on the probability of computing the missing bit $X_i$, $\mathrm{Pr}\left(B_2({\boldsymbol{\eta}},\mathbf{Z},\mathbf{X}^{-i})=\mathbf{Z}(i,\boldsymbol\alpha)\right)$.  
 
 The upper-bound is given as
\begin{align}\nonumber
 &\mathrm{Pr}\left(B_2({\boldsymbol{\eta}},\mathbf{Z},\mathbf{X}^{-i})=\mathbf{Z}(i,\boldsymbol\alpha)\right)\stackrel{(a)}<\\&\nonumber
 \mathrm{Pr}\left(B_2({\boldsymbol{\eta}},\mathbf{Z},\mathbf{X}^{-i})=\mathbf{Z}(i,\boldsymbol\alpha)\bigg|\boldsymbol\alpha\notin L_{{\mathbf{H}}(i,{\boldsymbol{\eta}})}\right)+\mathrm{Pr}\left(\boldsymbol\alpha\in L_{{\mathbf{H}}(i,{\boldsymbol{\eta}})}\right)\stackrel{(b)}<
 \\&\label{p-cal-mis-upper}\dfrac{1}{2}+{2^{-k/6}}+2^{-0.002kn+1},
\end{align}
 where $(a)$ follows from ignoring the probabilities $
 \mathrm{Pr}\left(B_2({\boldsymbol{\eta}},\mathbf{Z},\mathbf{X}^{-i})=\mathbf{Z}(i,\boldsymbol\alpha)\big|\boldsymbol\alpha\in L_{{\mathbf{H}}(i,{\boldsymbol{\eta}})}\right)$ and $\mathrm{Pr}\left(\boldsymbol\alpha\notin L_{{\mathbf{H}}(i,{\boldsymbol{\eta}})}\right)$ from the right-hand side of \eqref{p-cal-mis}, and $(b)$ follows because i) according to  Lemma~\ref{probalphaisgood}, an upper-bound on $\mathrm{Pr}\left(\boldsymbol\alpha\in L_{{\mathbf{H}}(i,{\boldsymbol{\eta}})}\right)$ is given as ${\mathrm{Pr}\left(\boldsymbol\alpha\in L_{{\mathbf{H}}(i,{\boldsymbol{\eta}})}\right)\le 2^{-0.002kn+1}}$ and ii) from the definition of goodness of algorithm  $B_2$, Def.~\ref{gooddef}, when $\boldsymbol\alpha\notin L_{{\mathbf{H}}(i,{\boldsymbol{\eta}})}$, we have $\left|\mathrm{Pr}\left(B_2({\boldsymbol{\eta}},\mathbf{Z},\underline{X}^{-i})=\mathbf{Z}(i,\boldsymbol\alpha)\right)-\dfrac{1}{2}\right|< 2^{-k/6}$, which, in turn, implies that $$\mathrm{Pr}\left(B_2({\boldsymbol{\eta}},\mathbf{Z},\mathbf{X}^{-i})=\mathbf{Z}(i,\boldsymbol\alpha)\big|\boldsymbol\alpha\notin L_{{\mathbf{H}}(i,{\boldsymbol{\eta}})}\right)<\dfrac{1}{2}+{2^{-k/6}}.$$

 The lower-bound is given as
 \begin{align}\nonumber
 &\mathrm{Pr}\left(B_2({\boldsymbol{\eta}},\mathbf{Z},\mathbf{X}^{-i})=\mathbf{Z}(i,\boldsymbol\alpha)\right)\stackrel{(a)}>\\&\nonumber
 \mathrm{Pr}\left(B_2({\boldsymbol{\eta}},\mathbf{Z},\mathbf{X}^{-i})=\mathbf{Z}(i,\boldsymbol\alpha)\bigg|\boldsymbol\alpha\notin L_{{\mathbf{H}}(i,{\boldsymbol{\eta}})}\right)\mathrm{Pr}\left(\boldsymbol\alpha\notin L_{{\mathbf{H}}(i,{\boldsymbol{\eta}})}\right)\stackrel{(b)}>\\&\label{p-cal-mis-lower}
 \dfrac{1}{2}-{2^{-k/6}}-2^{-0.002kn+1},
\end{align}
where $(a)$ follows from ignoring the probabilities $
 \mathrm{Pr}\left(B_2({\boldsymbol{\eta}},\mathbf{Z},\mathbf{X}^{-i})=\mathbf{Z}(i,\boldsymbol\alpha)\big|\boldsymbol\alpha\in L_{{\mathbf{H}}(i,{\boldsymbol{\eta}})}\right)$ and $\mathrm{Pr}\left(\boldsymbol\alpha\in L_{{\mathbf{H}}(i,{\boldsymbol{\eta}})}\right)$ from the right-hand side of \eqref{p-cal-mis}, and $(b)$ follows because i) according to Lemma~\ref{probalphaisgood}, a lower-bound on ${\mathrm{Pr}\left(\boldsymbol\alpha\notin L_{{\mathbf{H}}(i,{\boldsymbol{\eta}})}\right)}$ is given as ${\mathrm{Pr}\left(\boldsymbol\alpha\notin L_{{\mathbf{H}}(i,{\boldsymbol{\eta}})}\right)\ge 1-2^{-0.002kn+1}}$ and ii) from the definition of goodness of algorithm  $B_2$, when $\boldsymbol\alpha\notin L_{{\mathbf{H}}(i,{\boldsymbol{\eta}})}$, we have $\left|\mathrm{Pr}\left(B_2({\boldsymbol{\eta}},\mathbf{Z},\underline{X}^{-i})=\mathbf{Z}(i,\boldsymbol\alpha)\right)-\dfrac{1}{2}\right|< 2^{-k/6}$, which, in turn, implies that $$\mathrm{Pr}\left(B_2({\boldsymbol{\eta}},\mathbf{Z},\mathbf{X}^{-i})=\mathbf{Z}(i,\boldsymbol\alpha)\big|\boldsymbol\alpha\notin L_{{\mathbf{H}}(i,{\boldsymbol{\eta}})}\right)>\dfrac{1}{2}-{2^{-k/6}}.$$

 Finally, using the bounds in \eqref{p-cal-mis-upper} and \eqref{p-cal-mis-lower}, we have 
\begin{align}\nonumber
&\bigg|\mathrm{Pr}\left(B_2({\boldsymbol{\eta}},\mathbf{Z},\mathbf{X}^{-i})=\mathbf{Z}(i,\boldsymbol\alpha)\right)-\dfrac{1}{2}\bigg|<2^{-k/6}+2^{-0.002kn+1}.
 \end{align}
 which completes the proof of Proposition~\ref{bounded-lemma-1}.
%\end{proof}
\subsection{Proof of Theorem~\ref{‌Bounded-s-th}}\label{sec_proof_theo1}
Having proved the bit security, we show the relationship between the bit security and semantic security in the next lemma. More specifically, we show that if the adversary can break the semantic security of the protocol, then he can compute the $i$th missing bit in Phase II, which contradicts the bit security.
\begin{lemma}
    For any two equiprobable messages $\mathbf{M}^0$ and $\mathbf{M}^1$ of size $m$, any recording function $A_1(\boldsymbol\alpha):\{0,1\}^{kn}\rightarrow \{0,1\}^{\beta}$, any decoding algorithm $A_2$, for $\boldsymbol\alpha\stackrel{R}{\leftarrow}\{0,1\}^{kn}$, and $\mathbf{Z}\stackrel{R}{\leftarrow} \mathbb{Z}_n^k$, if
    \begin{align}\label{bitsec-r-sem1}
&\bigg|\mathrm{Pr}\left(A_2({\boldsymbol\zeta},\mathbf{Z},\mathbf{M}^1\oplus \mathbf{X})=1\right)-\mathrm{Pr}\left(A_2({\boldsymbol\zeta},\mathbf{Z},\mathbf{M}^0\oplus \mathbf{X})=1\right)\bigg|=\epsilon,
    \end{align}
then, there is an $i$, a recording function $B_1$, and a decoding algorithm $B_2$ such that 
\begin{align}\label{bitsec-r-sem2}
\left|\mathrm{Pr}\left(B_2(B_1(\boldsymbol\alpha),\mathbf{Z},\mathbf{X}^{-i})=\mathbf{Z}(i,\boldsymbol\alpha)\right)-\dfrac{1}{2}\right|\ge \dfrac{\epsilon}{2m}.
\end{align}
\end{lemma}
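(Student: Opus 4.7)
The plan is a standard hybrid argument converting a distinguisher into a bit predictor. I will interpolate between $\mathbf{M}^0$ and $\mathbf{M}^1$ by defining $m+1$ hybrid messages $\mathbf{H}^0,\mathbf{H}^1,\ldots,\mathbf{H}^m$, where $\mathbf{H}^j$ agrees with $\mathbf{M}^1$ in its first $j$ bits and with $\mathbf{M}^0$ in its remaining $m-j$ bits; thus $\mathbf{H}^0=\mathbf{M}^0$ and $\mathbf{H}^m=\mathbf{M}^1$. Set $p_j\triangleq \mathrm{Pr}(A_2(\boldsymbol\zeta,\mathbf{Z},\mathbf{H}^j\oplus\mathbf{X})=1)$. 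The triangle inequality gives $\epsilon=|p_m-p_0|\le\sum_{j=1}^m |p_j-p_{j-1}|$, so by pigeonhole there is an index $i$ with $|p_i-p_{i-1}|\ge\epsilon/m$. Relabeling $\mathbf{M}^0$ and $\mathbf{M}^1$ if needed, I may assume $p_i-p_{i-1}\ge\epsilon/m$, which also forces $M_i^0\ne M_i^1$ (otherwise $\mathbf{H}^{i-1}=\mathbf{H}^i$ and the gap would vanish).

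Having located this $i$, I will next construct $B_1$ and $B_2$. Take $B_1=A_1$, so $\boldsymbol\eta=\boldsymbol\zeta$. The decoder $B_2$, on input $(\boldsymbol\eta,\mathbf{Z},\mathbf{X}^{-i})$, draws a fresh uniform bit $b\stackrel{R}{\leftarrow}\{0,1\}$ and assembles a candidate ciphertext $\mathbf{C}^\star$ whose bits are $C_j^\star=M_j^1\oplus X_j$ for $j<i$, $C_i^\star=M_i^1\oplus b$, and $C_j^\star=M_j^0\oplus X_j$ for $j>i$; it then invokes $y=A_2(\boldsymbol\eta,\mathbf{Z},\mathbf{C}^\star)$ and returns $b$ if $y=1$ and $1\oplus b$ otherwise. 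Every quantity used to form $\mathbf{C}^\star$ is available to $B_2$: the two messages are public, the other final-key bits are in $\mathbf{X}^{-i}$, and only the unknown $X_i$ is replaced by the guess $b$.

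The analysis then splits on whether $b=X_i$. If $b=X_i$, then by construction $\mathbf{C}^\star=\mathbf{H}^i\oplus\mathbf{X}$; if $b\ne X_i$, the identity $M_i^1\oplus(1\oplus X_i)=M_i^0\oplus X_i$, valid because $M_i^0\ne M_i^1$, yields $\mathbf{C}^\star=\mathbf{H}^{i-1}\oplus\mathbf{X}$. Because $b$ is sampled independently of $(\boldsymbol\alpha,\mathbf{Z})$, one has $\mathrm{Pr}(b=X_i)=\tfrac12$ and conditioning on this event does not distort the distribution of $A_2$'s output, so $\mathrm{Pr}(B_2=X_i)=\tfrac12 p_i+\tfrac12(1-p_{i-1})=\tfrac12+\tfrac12(p_i-p_{i-1})\ge\tfrac12+\tfrac{\epsilon}{2m}$, which is \eqref{bitsec-r-sem2}. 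The main care point will be this last conditional-probability computation: one must explicitly use the independence of $b$ from $(\boldsymbol\alpha,\mathbf{Z})$ to replace $\mathrm{Pr}(A_2=1\mid b=X_i)$ by $p_i$ (and similarly for the $b\ne X_i$ branch), but no deeper obstacle is expected.
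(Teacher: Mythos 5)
Your hybrid-plus-random-guess reduction is correct and is essentially the argument the paper itself invokes by deferring to Lemma~23 of \cite{AumannZongRabin02} rather than writing it out: you correctly identify the one delicate step, namely that the guess bit $b$ is independent of $(\boldsymbol\alpha,\mathbf{Z})$, so conditioning on $b=X_i$ (resp.\ $b\ne X_i$) leaves the distribution of the ciphertext $\mathbf{H}^i\oplus\mathbf{X}$ (resp.\ $\mathbf{H}^{i-1}\oplus\mathbf{X}$) undistorted, giving $\mathrm{Pr}(B_2=X_i)=\tfrac12+\tfrac12(p_i-p_{i-1})$. The only cosmetic point is that the ``relabeling'' step is both unnecessary and slightly off (swapping $\mathbf{M}^0$ and $\mathbf{M}^1$ changes the hybrid chain and hence the index $i$); since the conclusion \eqref{bitsec-r-sem2} is stated with an absolute value, a gap $p_i-p_{i-1}\le -\epsilon/m$ is handled by the identical computation.
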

\begin{proof}
The proof follows similar steps as for the proof of Lemma~23 in \cite{AumannZongRabin02}.
\end{proof}
Finally, combining \eqref{bit-sec-m-le}, \eqref{bitsec-r-sem1}, and \eqref{bitsec-r-sem2} we have 
\begin{align}\nonumber
\nonumber
&\bigg|\mathrm{Pr}\left(A_2({\boldsymbol\zeta},\mathbf{Z},\mathbf{M}^1\oplus \mathbf{X})=1\right)-\mathrm{Pr}\left(A_2({\boldsymbol\zeta},\mathbf{Z},\mathbf{M}^0\oplus \mathbf{X})=1\right)\bigg|\\& ~~~~~~~~~~~~~~~~~~~~~~~~~~~<m(2^{-k/6+1}+2^{-0.002kn+2}),
\end{align}
 which completes the proof of Theorem~\ref{‌Bounded-s-th}.

\section{Conclusions}\label{Conclusions}
In this paper, we considered a general bounded storage model where the adversary can access all bits of the random string, and store the output of any Boolean function on the string. We reaffirm that the protocol provided by Maurer in \cite{Maurer} is absolutely semantically secure in the general bounded storage model with a secret key and a random string of efficient sizes.

The interesting future work would include leveraging the concept of the bounded storage model to develop absolutely semantically secure protocols for emerging applications in communication systems. Examples include but are not limited to cloud storage, cloud computing, and secure multi-party communication in the presence of an adversary with unbounded storage and unbounded computational power.

\section{Appendix}
\subsection{Proof of Lemma~\ref{lemma-bou-dv}}\label{app_lem1}
Let $p$ denote the fraction of ones, and $p'=1-p$ denote the fraction of zeros in $\underline\alpha$, then we have
\begin{align}\nonumber d(\underline\nu(i,\underline\alpha))&\stackrel{(a)}=K\bigg|\mathrm{Pr}(\mathbf{Z}(i,\underline{\alpha})=0)-\mathrm{Pr}(\mathbf{Z}(i,\underline{\alpha})=1)\bigg|\\&\nonumber
\stackrel{(b)}=K|p'-p|^k\\&\stackrel{(c)}\le K2^{-k/3},
\end{align}
where equality $(a)$ follows from the fact that  i) $\mathrm{Pr}(\mathbf{Z} (i,\underline{\alpha})=0)$ ($\mathrm{Pr}(\mathbf{Z} (i,\underline{\alpha})=1)$) is calculated as the ratio of the number of zeros (ones) in $\underline\nu(i,\underline\alpha)$ over the number all elements in $\underline\nu(i,\underline\alpha)$ which is $K$, and ii) 
$d(\underline\nu(i,\underline\alpha))$ measures the excess of ones over zeros, or vice versa, in the string $\underline\nu(i,\underline\alpha)$,
and equality $(b)$ follows because for $\mathbf{Z}\stackrel{R}{\leftarrow} \mathbb{Z}_n^k$, we have
    \begin{align}\nonumber
    &\mathrm{Pr}(\mathbf{Z}(i,\underline{\alpha})=1)=\sum_{j \text{odd}}{\binom{k}{j}}p^j{p'}^{k-j},\\&
    \mathrm{Pr}(\mathbf{Z}(i,\underline{\alpha})=0)=\sum_{j \text{even}}{\binom{k}{j}}p^j{p'}^{k-j},
    \end{align}
    thus, using the Binomial theorem \cite[Page~162]{BGraham1994Concrete}, we have 
    \begin{align}
        &\left|\mathrm{Pr}(\mathbf{Z}(i,\underline{\alpha})=0)-\mathrm{Pr}(\mathbf{Z}(i,\underline{\alpha})=1)\right|\\&\nonumber
       =\left|\sum_{j~ \text{even}}{\binom{k}{j}}p^j{p'}^{k-j}-\sum_{j~ \text{odd}}{\binom{k}{j}}p^j{p'}^{k-j}\right|\\&\nonumber
        =|p'-p|^k.
    \end{align}
    Finally,  inequality $(c)$ follows because the fractions of ones and zeros in $\underline\alpha$ are both no less than $1/8$, thus,
    \begin{align}\nonumber
       |p-p'|^k&\le(\dfrac{7}{8}-\dfrac{1}{8})^k\\&<2^{-k/3}.
    \end{align}

\subsection{Proof of Lemma~\ref{boundongoodalpha}}\label{app_lem2}
Let $o(\underline\alpha)$ denote the number of ones in the string $\underline\alpha$. Then,  according to Lemma~1, the string $\underline\alpha$  belongs to $\mathcal{D}$, i.e., $\underline\alpha\in\mathcal{D}$, implies that %for all  $1\le j\le k,$
  $o(\underline\alpha)<\dfrac{nk}{8}$ or $o(\underline\alpha)>\dfrac{7nk}{8}$. Thus, for a random string $\boldsymbol\alpha$, the probability of the event  $\boldsymbol\alpha\in\mathcal{D}$ is upper-bounded as 
  \begin{align}\label{alphagoodpro}
\mathrm{Pr}\left(\boldsymbol\alpha\in\mathcal{D}\right)\le\mathrm{Pr}\left(o(\boldsymbol\alpha)<\dfrac{nk}{8}~\text{or}~o(\boldsymbol\alpha)>\dfrac{7nk}{8}\right).
  \end{align}
  Using Stirling's approximation \cite[Page~598]{BGraham1994Concrete}, for $\boldsymbol\alpha\stackrel{R}{\leftarrow}\{0,1\}^{kn},$ we have 
  \begin{align}\nonumber
\mathrm{Pr}\left(o(\boldsymbol\alpha)<\dfrac{nk}{8}~\text{or}~o(\boldsymbol\alpha)>\dfrac{7nk}{8}\right)&\le 2^{-nk(1-h(1/8))+1}\\&\nonumber
=2^{-0.4563kn+1}\\&\nonumber=2^{-0.4560kn-0.0003kn+1}\\&\label{bound-on-PD}\stackrel{(a)}< 2^{-0.456kn}
  \end{align}
  where $h(.)$ is the binary entropy, i.e., $h(1/8)=1/8\log(8)+7/8\log(8/7)$, and $(a)$ follows because in practice, $n$ is very large\footnote{In practice, $n$ is greater than $2^{45}$.} such that $-0.0003kn+1 <0$.
  %
  %
  %. Now, by using \eqref{Stirling}, a bound on $\mathrm{Pr}\left(\boldsymbol\alpha\in\mathcal{D}\right)$ in \eqref{alphagoodpro} is given as
 % \begin{align}\label{bound-on-PD}   \mathrm{Pr}\left(\boldsymbol\alpha\in\mathcal{D}\right)&< 2^{-kn(1-h(1/8))+k}\\&\nonumber=2^{-0.4563kn+k}\\&\nonumber=2^{-0.4560kn-0.0003kn+k}\\&\nonumber\stackrel{(a)}< 2^{-0.456kn}.
 % \end{align}
  %8-->6
%where $(a)$ follows because in practice, $n$ is very large\footnote{In practice, $n$ is greater than $2^{45}$.} such that $-0.0003kn+k <0$.
  Finally, using \eqref{bound-on-PD}, we have 
\begin{align}\nonumber
|\mathcal{D}|&=2^{kn}\mathrm{Pr}\left(\boldsymbol\alpha\in\mathcal{D}\right)\\&\nonumber< 2^{0.544kn}.   
\end{align}  

\subsection{Proof of Lemma~\ref{alphagooddelta}}\label{app_lem3}
 From the definition of $\Delta$, the value of $|\delta_{j,j'}|$ is given as
  \begin{align} \nonumber  &|\delta_{j,j'}|=|\underline\nu(i,\underline\alpha_j).\underline\nu(i,\underline\alpha_{j'})|=\\&\nonumber
  \left|(\overline{\underline{Z}_1(i,\underline\alpha_j)},\ldots,\overline{\underline{Z}_K(i,\underline\alpha_{j})}).(\overline{\underline{Z}_1(i,\underline\alpha_{j'})},\ldots,\overline{\underline{Z}_K(i,\underline\alpha_{j'})})\right|=
  \\&\nonumber 
\left|\overline{\underline{Z}_1(i,\underline\alpha_j)}~\overline{\underline{Z}_1(i,\underline\alpha_{j'})}+\cdots+\overline{\underline{Z}_K(i,\underline\alpha_j)}~\overline{\underline{Z}_K(i,\underline\alpha_{j'})}\right|
  \stackrel{(a)}=\\& d(\underline\nu(\underline\alpha_j\oplus\underline\alpha_{j'})),
  \end{align}
where $.$ represents the inner product of two vectors and $(a)$ follows from the fact that $\overline{\underline{Z}_{j''}(i,\underline\alpha_j)}~\overline{\underline{Z}_{j''}(i,\underline\alpha_{j'})}$ can be written as
\begin{align}\nonumber
\overline{\underline{Z}_{j''}(i,\underline\alpha_j)}~\overline{\underline{Z}_{j''}(i,\underline\alpha_{j'})}&=(-1)^{\underline{Z}_{j''}(i,\underline\alpha_j)} (-1)^{\underline{Z}_{j''}(i,\underline\alpha_{j'})}\\&\nonumber
=(-1)^{\underline{Z}_{j''}(i,\underline\alpha_j)\oplus\underline{Z}_{j''}(i,\underline\alpha_{j'})}\\&\nonumber
\stackrel{(a)}=(-1)^{\underline{Z}_{j''}(i,\underline\alpha_j\oplus\alpha_{j'})}
\\&
=\overline{\underline{Z}_{j''}(i,\underline\alpha_j\oplus\alpha_{j'})},
\end{align}
  where $(a)$ follows from the definition of $\underline{Z}_{j''}(i,\underline\alpha_j)$  in \eqref{eq2ql}. Thus, the $j$th row of matrix $\Delta$ is $$\left(d(\underline\nu(\underline\alpha_j\oplus\underline\alpha_{1}), d(\underline\nu(\underline\alpha_j\oplus\underline\alpha_{2}), \ldots,d(\underline\nu(\underline\alpha_j\oplus\underline\alpha_{N})\right).$$
  Since the sequence of strings $\underline\alpha_j\oplus\underline\alpha_{1},\underline\alpha_j\oplus\underline\alpha_{2},\ldots,\underline\alpha_j\oplus\underline\alpha_{N}$ enumerates all possible binary strings with length $kn$, it follows directly from Lemma~\ref{boundongoodalpha} that the number of elements $\delta_{j,j'}$ in the $j$th row of $\Delta$ such that $|\delta_{j,j'}|>K2^{-k/3}$ is at most $2^{0.544kn}$.

\subsection{Proof of Lemma~\ref{comp-ind-final}}\label{app_lem4}
%The proof follows similar steps as for the proof of Lemma~\ref{Component independence}.
 Recall that $X_i=\bigoplus_{j=1}^k\boldsymbol{\alpha}^{(j)}[Z_{j}\pmb{+}i-1]$.
We show that for any two distinct $i$ and $i'$, the sets of bits of random string $\boldsymbol\alpha$ that are used to compute  $X_{i}$ and $X_{i'}$ have no bit in common, thus,  the components of $\mathbf{X}$ are computed by using different bits of the random string $\boldsymbol\alpha$ and consequently, they are statistically independent.
%$X_{i}$ and $X_{i'}$ are statistically independent.
By looking at the procedure of computing $X_{i}$ and $X_{i'}$, we can see that the sets of bits of random string $\boldsymbol\alpha$ that are used to compute  $X_{i}$ and $X_{i'}$ have no bit in common if the components of the $k$-tuple $\mathbf{D}_{ii'}=\mathbf{S}^{(i)}-\mathbf{S}^{(i')}$ are all non-zero. Using \eqref{eq1},  we have $\mathbf{D}_{ii'}=(i-i')\bm{1}$, which is a $k$-tuple with non-zero components for any two distinct $i$ and $i'$.

\subsection{Proof of Lemma~\ref{xizin}}\label{app_lem5}
%The proof follows similar steps as for the proof of Theorem~\ref{theorem-2i} presented in Section~\ref{Proof of Theorem2}.
Let $(z_1,\ldots,z_k)$ denote an arbitrary realization of the secret key $\mathbf{Z}$,  $p_1$ denote the probability that any bit in the random string $\boldsymbol\alpha$ is one, and $p_1'=1-p_1$ denote the probability that any bit in the random string $\boldsymbol\alpha$ is zero. Then, the probability $\mathrm{Pr}(\mathbf{X}^{-i}|\mathbf{Z})$ is calculated as 
  \begin{align}\nonumber
 \mathrm{Pr}(\mathbf{X}^{-i}|\mathbf{Z})&=\mathrm{Pr}\bigg(\mathbf{X}^{-i}=(x_1,\ldots,x_{i-1},x_{i+1},\ldots,x_m)\big|\mathbf{Z}=(z_1,\ldots,z_k)\bigg)\\&\nonumber
 =\mathrm{Pr}\bigg(\bigoplus_{j=1}^k\boldsymbol{\alpha}^{(j)}[S^{(1)}_{j}]=x_1,\ldots,\bigoplus_{j=1}^k\boldsymbol{\alpha}^{(j)}[S^{(m)}_j]=x_m)\big|\mathbf{Z}=(z_1,\ldots,z_k)\bigg)\\&\nonumber\stackrel{(a)}=\mathrm{Pr}\bigg(\bigoplus_{j=1}^k\boldsymbol{\alpha}^{(j)}[z_{j}]=x_1,\ldots,\bigoplus_{j=1}^k\boldsymbol{\alpha}^{(j)}[z_{j}+m-1]=x_m\bigg) \\&\nonumber \stackrel{(b)}=\prod_{l\in\{1,\cdots,m\}\setminus\{i\}}\mathrm{Pr}\left(\bigoplus_{j=1}^k\boldsymbol{\alpha}^{(j)}[z_{j}+l-1]=x_l\right)
 \\&\nonumber
 =\prod_{l\in\{1,\cdots,m\}\setminus\{i\}}\bigg( \sum_{t=0}^{\lfloor k/2 \rfloor} {\binom{k}{2t+x_l}} p_1^{2t+x_l}{p_1'}^{k-(2t+x_l)}\bigg)\\&\nonumber
 =\sum_{\underline{{Z}}\in\{0,\cdots,n-1\}^k}\bigg(\prod_{l\in\{1,\cdots,m\}\setminus\{i\}}\bigg( \sum_{t=0}^{\lfloor k/2 \rfloor} {\binom{k}{2t+x_l}} p_1^{2t+x_l}{p_1'}^{k-(2t+x_l)}\bigg)\bigg)\dfrac{1}{n^k}
 \\&\nonumber
   \stackrel{(c)}=\sum_{\underline{{Z}}\in\{0,\cdots,n-1\}^k}\bigg(\prod_{l\in\{1,\cdots,m\}\setminus\{i\}}\bigg(\sum_{t=0}^{\lfloor k/2 \rfloor} {\binom{k}{2t+x_l}} p_1^{2t+x_l}{p_1'}^{k-(2t+x_l)}\bigg)\bigg)\mathrm{Pr}(\mathbf{Z}=\underline{Z})
  \\&\nonumber
 =\sum_{\underline{{Z}}\in\{0,\cdots,n-1\}^k}\left( 
 \mathrm{Pr}(\mathbf{X}^{-i}|\mathbf{Z})\right)\mathrm{Pr}(\mathbf{Z}=\underline{Z})\\&
 =
 \mathrm{Pr}(\mathbf{X}^{-i}),
  %\mathrm{Pr}(\mathbf{Z}|\mathbf{X}^{-i})=\dfrac{\mathrm{Pr}(\mathbf{X}^{-i}|\mathbf{Z})\mathrm{Pr}(\mathbf{Z})}{\mathrm{Pr}(\mathbf{X}^{-i})}
  \end{align}
  where $\lfloor.\rfloor$ represents the floor function, $(a)$ follows from \eqref{eq1}, $(b)$ follows from Lemma~\eqref{comp-ind-final}, i.e., for  $l\ne l'$ the set of bits of $\boldsymbol\alpha$ that are used  to compute $x_l$ has no common bit with the set of bits of $\boldsymbol\alpha$ that are used  to compute $x_{l'}$, and $(c)$ follows because  $\mathbf{Z}$ is chosen uniformly at random from $\mathbb{Z}_n^k$, i.e., $\mathbf{Z}\stackrel{R}{\leftarrow} \mathbb{Z}_n^k$.

\subsection{Proof of Lemma~\ref{equi-good-def}}\label{app_lem7}
First, let us suppose that $\overline{\underline{H}(i,\underline\eta)}.\underline\nu(i,\underline\alpha)\ge \dfrac{2K}{2^{k/6}}$. Then, we have 
\begin{align}\nonumber
&\mathrm{Pr}\left(B_2(\underline\eta,\mathbf{Z},\underline{X}^{-i})=\mathbf{Z}(i,\underline\alpha)\right)\\&\nonumber\stackrel{(a)}=\dfrac{\overline{\underline{H}(i,\underline\eta)}.\underline\nu(i,\underline\alpha)+\dfrac{K-\overline{\underline{H}(i,\underline\eta)}.\underline\nu(i,\underline\alpha)}{2}}{K}\\&
=\dfrac{1}{2}+\dfrac{\overline{\underline{H}(i,\underline\eta)}.\underline\nu(i,\underline\alpha)}{2K},
\end{align}
where $(a)$ follows because $\overline{\underline{H}(i,\underline\eta)}.\underline\nu(i,\underline\alpha)$ is equivalent to the difference between the number of zero and non-zero elements in the tuple $\overline{\underline{H}(i,\underline\eta)}-\underline\nu(i,\underline\alpha)$. 
%from Lemma~\ref{xizin}, i.e., knowing $\underline{X}^{-i}$ does not change the distribution of $\mathbf{Z}$. 
By using $\overline{\underline{H}(i,\underline\eta)}.\underline\nu(i,\underline\alpha)\ge \dfrac{2K}{2^{k/6}}$, we have 
\begin{align}\label{samedef-u}
\mathrm{Pr}\left(B_2(\underline\eta,\mathbf{Z},\underline{X}^{-i})=\mathbf{Z}(i,\underline\alpha)\right)-\dfrac{1}{2}\ge\dfrac{1}{2^{k/6}}.
\end{align}
Now, suppose that $\overline{\underline{H}(i,\underline\eta)}.\underline\nu(i,\underline\alpha)\le -\dfrac{2K}{2^{k/6}}$. Then, we have 
\begin{align}\nonumber
\mathrm{Pr}\big(B_2(\underline\eta,\mathbf{Z},\underline{X}^{-i})&=\mathbf{Z}(i,\underline\alpha)\big)\\&\nonumber=\dfrac{\dfrac{K+\overline{\underline{H}(i,\underline\eta)}.\underline\nu(i,\underline\alpha)}{2}}{K}\\&
=\dfrac{1}{2}+\dfrac{\overline{\underline{H}(i,\underline\eta)}.\underline\nu(i,\underline\alpha)}{2K}.
\end{align}
By using $\overline{\underline{H}(i,\underline\eta)}.\underline\nu(i,\underline\alpha)\le -\dfrac{2K}{2^{k/6}}$, we have 
\begin{align}\label{samedef-l}
\mathrm{Pr}\left(B_2(\underline\eta,\mathbf{Z},\underline{X}^{-i})=\mathbf{Z}(i,\underline\alpha)\right)-\dfrac{1}{2}\le-\dfrac{1}{2^{k/6}}.
\end{align}
Finally, combining \eqref{samedef-u} and \eqref{samedef-l}, we have $\left|\mathrm{Pr}\left(B_2(\underline\eta,\mathbf{Z},\underline{X}^{-i})=\mathbf{Z}(i,\underline\alpha)\right)-\dfrac{1}{2}\right|\ge\dfrac{1}{2^{k/6}}$, which completes the proof.

\subsection{Proof of Lemma~\ref{boundongoods}}\label{app_lem8}
Let $$L^+_{\underline{H}(i,\underline\eta)}=\left\{\underline\alpha\in\{0,1\}^{nk}:\overline{\underline{H}(i,\underline\eta)}.\underline\nu(i,\underline\alpha)\ge\dfrac{2K}{2^{k/6}}\right\},$$
$$L^-_{\underline{H}(i,\underline\eta)}=L_{\underline{H}(i,\underline\eta)}-L^+_{\underline{H}(i,\underline\eta)},$$ and $\underline\xi$ be a binary row vector of length $N=2^{nk}$, where $\underline\xi_j=1$ indicates that $\underline\alpha_j\in L^+_{\underline{H}(i,\underline\eta)}$; otherwise, $\underline\xi_j=0$. Then, we have
\begin{align}\nonumber
\dfrac{2K}{2^{k/6}}\left|L^+_{\underline{H}(i,\underline\eta)}\right|&\stackrel{(a)}\le \overline{\underline{H}(i,\underline\eta)}.\mathbf{V}.\underline\xi\\&\nonumber
\stackrel{(b)}\le\|\overline{\underline{H}(i,\underline\eta)}\|\|\mathbf{V}.\underline\xi\|\\&\label{p-b-g-1}
\stackrel{(c)}\le\sqrt{K}\|\mathbf{V}.\underline\xi\|,
\end{align}
where $\|.\|$ represents the Euclidean norm,  $(a)$ follows from the definition of $L^+_{\underline{H}(i,\underline\eta)}$, $(b)$ follows from the Cauchy-Schwartz inequality, and $(c)$ comes from the fact that the Euclidean norm of a binary vector with length $K$ is calculated as $\sqrt{K}$. The remaining task is to derive an upper-bound on $\|\mathbf{V}.\underline\xi\|$. To this end, first, we find an upper-bound on $\|\mathbf{V}.\underline\xi\|^2,$ which is given as
\begin{align}\nonumber
\|\mathbf{V}.\underline\xi\|^2&= \underline\xi^T.\mathbf{V}^T \mathbf{V}.\underline\xi\\&\nonumber
=\sum_{j=1}^{N}\sum_{j'=1}^{N}\delta_{jj'}\underline\xi_j\underline\xi_{j'}\\&\nonumber
\stackrel{(a)}\le \sum_{j=1}^{N}\sum_{j'=1}^{N}|\delta_{jj'}|\underline\xi_j\underline\xi_{j'}\\&\nonumber
=\sum_{j=1}^{N}\underline\xi_j\bigg(\sum_{j':|\delta_{jj'}|>K2^{-k/3}}|\delta_{jj'}|\underline\xi_{j'}+\sum_{j':|\delta_{jj'}|\le K2^{-k/3}}|\delta_{jj'}|\underline\xi_{j'}\bigg)\\&\nonumber
\stackrel{(b)}\le \sum_{j=1}^{N}\underline\xi_j\bigg(\sum_{j':|\delta_{jj'}|>K2^{-k/3}}|\delta_{jj'}|+\sum_{j':|\delta_{jj'}|\le K2^{-k/3}}|\delta_{jj'}|\underline\xi_{j'}\bigg)\\&\nonumber
\stackrel{(c)}\le \left|L^+_{\underline{H}(i,\underline\eta)}\right|\left(K2^{0.544kn}+\sum_{j':|\delta_{jj'}|\le K2^{-k/3}}|\delta_{jj'}|\underline\xi_{j'}\right)\\&\label{p-b-g-2}
\stackrel{(d)}\le \left|L^+_{\underline{H}(i,\underline\eta)}\right|\left(K2^{0.544kn}+\left|L^+_{\underline{H}(i,\underline\eta)}\right|K2^{-k/3}\right),
\end{align}
where $(a)$ follows because $|\delta_{jj'}|\ge\delta_{jj'}$, $(b)$ follows because for each $j$, we take the summation over all $|\delta_{jj'}|>K2^{-k/3}$, regardless of if $\underline\xi_{j'}=1$ or not, $(c)$ follow from Lemma~\ref{alphagooddelta} and the fact that the maximum value of $|\delta_{jj'}|$ is $K$ (recall that $|\delta_{jj'}|=|\underline\nu(i,\underline\alpha_j).\underline\nu(i,\underline\alpha_{j'})|$, where $\underline\nu(i,\underline\alpha_j)$ and $\underline\nu(i,\underline\alpha_{j'})$ are binary vectors of length $K$), and $(d)$ follows because for ${j':|\delta_{jj'}|\le K2^{-k/3}}$, we have $\sum_{j':|\delta_{jj'}|\le K2^{-k/3}}|\delta_{jj'}|\underline\xi_{j'}\le \left|L^+_{\underline{H}(i,\underline\eta)}\right|K2^{-k/3}$.

Now, combining \eqref{p-b-g-1} and the bound in \eqref{p-b-g-2}, we have 
\begin{align}\nonumber
    &\dfrac{2K}{2^{k/6}}\left|L^+_{\underline{H}(i,\underline\eta)}\right|\le \sqrt{K}\left|L^+_{\underline{H}(i,\underline\eta)}\right|^{1/2}\big(K2^{0.544kn}+\left|L^+_{\underline{H}(i,\underline\eta)}\right|K2^{-k/3}\big)^{1/2}.
\end{align}
Some algebra shows that
\begin{align}\label{plusbounf}
\left|L^+_{\underline{H}(i,\underline\eta)}\right|\le \dfrac{2^{0.544kn+k/3}}{3}.
\end{align}
Following the same approach used to derive the upper-bound \eqref{plusbounf}, an upper-bound on $\left|L^-_{\underline{H}(i,\underline\eta)}\right|$ is given as 
\begin{align}\label{nplusbounf}
\left|L^-_{\underline{H}(i,\underline\eta)}\right|\le \dfrac{2^{0.544kn+k/3}}{3}.
\end{align}
Finally, using \eqref{plusbounf} and \eqref{nplusbounf}, an upper-bound on $\left|L_{\underline{H}(i,\underline\eta)}\right|$ is given as
\begin{align}\nonumber
\left|L_{\underline{H}(i,\underline\eta)}\right|&=\left|L^+_{\underline{H}(i,\underline\eta)}\right|+\left|L^-_{\underline{H}(i,\underline\eta)}\right| \\&\nonumber \le \dfrac{2}{3} 2^{0.544kn+{k/3}}\\&< 2^{0.544kn+{k/3}}.
\end{align}

\subsection{Proof of Lemma~\ref{preimage}}\label{app_lem9}
The recording algorithm $B_1$ maps the set of all binary strings $\{0,1\}^{kn}$ into $2^\beta$ disjoint subsets $\mathcal{F}_1,\ldots,\mathcal{F}_{2^\beta}$, where $\beta=0.45kn$. Thus, we have 
\begin{align}\nonumber
&\mathrm{Pr}\left(\left|B_1^{-1}(B_1(\boldsymbol\alpha))\right|< 2^{0.548kn}\right)\\&\nonumber=\dfrac{\{\underline\alpha: \left|B_1^{-1}(B_1(\underline\alpha))\right|< 2^{0.548kn}\}}{2^{kn}}\\&\nonumber=\dfrac{\sum_{j:|\mathcal{F}_j|<2^{0.548kn}}|\mathcal{F}_j|}{2^{kn}}\\&\nonumber\le\dfrac{2^{0.548kn}2^{0.45kn}}{2^{kn}}\\&=2^{-0.002kn}.
\end{align}

\subsection{Proof of Lemma~\ref{probalphaisgood}}\label{app_lem10}
By using the law of total probability, we have 
\begin{align}\nonumber
&\mathrm{Pr}\left(\left|\overline{{\mathbf{H}}(i,{\boldsymbol\eta})}.{\boldsymbol\nu}(i,\boldsymbol\alpha)\right|\ge\dfrac{2K}{2^{k/6}}\right) \\&\nonumber=\mathrm{Pr}\left(\left|\overline{{\mathbf{H}}(i,{\boldsymbol\eta})}.{\boldsymbol\nu}(i,\boldsymbol\alpha)\right|\ge\dfrac{2K}{2^{k/6}}\bigg|\left|B_1^{-1}(B_1(\boldsymbol\alpha))\right|\ge 2^{0.548kn}\right)
\\&\nonumber~~~~~~~~~~~~~~~~~~~~~~~~~~~~~~~~~~~\mathrm{Pr}\left(\left|B_1^{-1}(B_1(\boldsymbol\alpha))\right|\ge 2^{0.548kn}\right)\\&\nonumber+\mathrm{Pr}\left(\left|\overline{{\mathbf{H}}(i,{\boldsymbol\eta})}.{\boldsymbol\nu}(i,\boldsymbol\alpha)\right|\ge\dfrac{2K}{2^{k/6}}\bigg|\left|B_1^{-1}(B_1(\boldsymbol\alpha))\right|< 2^{0.548kn}\right)
\\&\nonumber
~~~~~~~~~~~~~~~~~~~~~~~~~~~~~~~~~~~\mathrm{Pr}\left(\left|B_1^{-1}(B_1(\boldsymbol\alpha))\right|< 2^{0.548kn}\right)\\&\nonumber\stackrel{(a)}\le  \mathrm{Pr}\left(\left|\overline{{\mathbf{H}}(i,{\boldsymbol\eta})}.{\boldsymbol\nu}(i,\boldsymbol\alpha)\right|\ge\dfrac{2K}{2^{k/6}}\bigg|\left|B_1^{-1}(B_1(\boldsymbol\alpha))\right|\ge 2^{0.548kn}\right)
\\&\nonumber~~~~~~~~~~~~~~~~~~~~~~~~~~~~~~~~
+\mathrm{Pr}\left(\left|B_1^{-1}(B_1(\boldsymbol\alpha))\right|< 2^{0.548kn}\right)\\&\nonumber\stackrel{(b)}\le
\dfrac{2^{0.544nk+k/3}}{2^{0.548kn}}+2^{-0.002kn}\\&\nonumber=2^{-0.002kn-0.002kn+k/3}+2^{-0.002kn} \\&\nonumber\stackrel{(c)}\le2^{-0.002kn}+2^{-0.002kn}\\&=2^{-0.002kn+1},
\end{align}
where $(a)$ follows from ignoring the probabilities $\mathrm{Pr}\left(\left|B_1^{-1}(B_1(\boldsymbol\alpha))\right|\ge 2^{0.548kn}\right)$ and
$$\mathrm{Pr}\left(\left|\overline{{\mathbf{H}}(i,{\boldsymbol\eta})}.{\boldsymbol\nu}(i,\boldsymbol\alpha)\right|\ge\dfrac{2K}{2^{k/6}}\bigg|\left|B_1^{-1}(B_1(\boldsymbol\alpha))\right|<2^{0.548kn}\right),$$
$(b)$ follows from i) using the upper-bound derived on $\mathrm{Pr}\left(\left|B_1^{-1}(B_1(\boldsymbol\alpha))\right|< 2^{0.548kn}\right)$ in Lemma~\ref{preimage} and ii) the fact that using Lemma~\ref{boundongoods}, we have 
\begin{align}\nonumber
&\mathrm{Pr}\left(\left|\overline{{\mathbf{H}}(i,{\boldsymbol\eta})}.{\boldsymbol\nu}(i,\boldsymbol\alpha)\right|\ge\dfrac{2K}{2^{k/6}}\bigg|\left|B_1^{-1}(B_1(\boldsymbol\alpha))\right|\ge 2^{0.548kn}\right)\le\dfrac{2^{0.544nk+k/3}}{2^{0.548kn}},
\end{align}
and $(c)$ follows because in practice, $n$ is sufficiently large such that $-0.002kn+k/3<0$. 

\bibliographystyle{IEEEtran}
%\begin{spacing}{1.4}
\bibliography{Bibliography}

\end{document}